\documentclass[submission,copyright,creativecommons]{eptcs}

\usepackage[disable]{todonotes}
\usepackage{mathtools}
\usepackage{amsthm}
\usepackage{amssymb, amsmath}
\usepackage{multirow}
\usepackage{array}
\usepackage{setspace}
\usepackage{enumitem}
\usepackage[square, numbers]{natbib}
\usepackage{cleveref}
\usepackage{ebproof}
\usepackage{subcaption}

\usepackage{iftex}

\MakeRobust{\ref}

\makeatletter
\newcommand{\labeltext}[2]{%
  #1
  \@bsphack
  \csname phantomsection\endcsname 
  \def\@currentlabel{#1}{\label{#2}}%
  \@esphack
}
\makeatother

\ifpdf
  \usepackage{underscore}         
  \usepackage[T1]{fontenc}        
\else
  \usepackage{breakurl}           
\fi

\title{Unification in Matching Logic --- Revisited}
\author{Ádám Kurucz
\institute{ELTE Eötvös Loránd University\\
Budapest, Hungary}
\email{cphfw1@inf.elte.hu}
\and
Péter Bereczky
\institute{ELTE Eötvös Loránd University\\
Budapest, Hungary}
\email{berpeti@inf.elte.hu}
\and
Dániel Horpácsi
\institute{ELTE Eötvös Loránd University\\
Budapest, Hungary}
\email{daniel-h@elte.hu}
}

\newtheorem{lemma}{Lemma}
\theoremstyle{definition}
\newtheorem{definition}{Definition}
\newtheorem{theorem}{Theorem}
\newtheorem{example}{Example}
\newtheorem{field}{Property}
\newtheorem{corollary}{Corollary}
\DeclareUnicodeCharacter{03BC}{\ensuremath{\mu}}

\newcommand*{\highlight}{\underline}
\newcommand{\EVar}{\textit{EV}}

\newcommand{\app}[2]{#1 \, #2}
\newcommand{\imp}[2]{#1 \to #2}
\newcommand{\expat}[2]{\exists #1.\, #2}

\newcommand{\defined}[1]{\lceil #1 \rceil}
\newcommand{\total}[1]{\lfloor #1 \rfloor}
\newcommand{\subst}[3]{#1[#3/#2]}
\newcommand{\ctxsubst}[2]{#1[#2]}
\newcommand{\defines}{\stackrel{\text{\normalfont\tiny def}}{=}}
\newcommand{\FV}[1]{\textit{FV}(#1)}
\newcommand{\allpat}[2]{\forall #1.\, #2}

\newcommand{\Capp}{C^{\$}}

\newcommand{\hilbert}{\vdash}
\newcommand{\seqindex}{\mathit{S}}
\newcommand{\deduct}{\vdash_\seqindex}
\newcommand{\Constraints}{\mathit{C}}
\newcommand{\TopConstraint}{\top_{\Constraints}}

\newcommand{\sequent}[4]{{#1} \blacktriangleright {#2} \deduct {#3}} 

\newcommand{\provedc}[3]{#1 \hilbert #3} 
\newcommand{\DEFINEDNESS}{\mathsf{DEF}}

\newcommand{\mkspace}{\vspace{0.3cm}}

\newcommand{\singletonUP}[2]{\langle #1,\ #2 \rangle}

\newcommand{\insertUPpair}[3]{#1 \triangleleft \singletonUP{#2}{#3}} 
\newcommand{\toPredUP}[1]{\phi^{#1}}

\newcommand{\rulearrow}[1]{\overset{\textbf{#1}}{\Rightarrow}}
\newcommand{\rulephantom}{\vphantom{\rulearrow{X}}}
\newcolumntype{\ruletablespec}{ >{\(}r<{\)} @{\hspace{1ex}} >{\(}c<{\)} }

\begin{document}
\maketitle

\begin{abstract}
Matching logic is a logical framework for specifying and reasoning about programs using pattern matching semantics. A pattern is made up of a number of structural components and constraints. Structural components are syntactically matched, while constraints need to be satisfied. Having multiple structural patterns poses a practical problem as it requires multiple matching operations. This is easily remedied by unification, for which an algorithm has already been defined and proven correct in a sorted, polyadic variant of matching logic. This paper revisits the subject in the applicative variant of the language while generalising the unification problem and mechanizing a proven-sound solution in Coq.
\end{abstract}

\section{Introduction}

First-order unification is the process of solving equations between first-order terms (symbolic expressions). In particular, unification of first-order terms $t_1$ and $t_2$ yields a substitution $\sigma$ such that it makes the two terms syntactically equal (identical): $t_1 \sigma = t_2 \sigma$. Unification plays a crucial role in automatic theorem proving via first-order resolution as well as in term rewriting.

Matching logic is a general logical framework for specifying and reasoning about programming languages and programs using pattern matching semantics. In matching logic, formulas (called patterns) are evaluated to a subset of a domain, and those evaluating to the full set are valid. Matching logic is the formal meta-language of the K framework~\cite{ChenTrustworthyK}, hosting programming language semantics in terms of constrained rewrite systems. Formal reasoning in these theories requires a sound unification algorithm supporting, amongst others, the reduction of patterns specifying program states.

Despite the fact that matching logic is not an equational logic, syntactic unification has a semantic counterpart in it, based on the notion of equality derived from other, lower-level operations. Given two matching logic patterns $\varphi_1$ and $\varphi_2$, their semantic unifier is a pattern $\varphi$ such that $\varphi \rightarrow \varphi_1 = \varphi_2$, where ``$=$'' denotes semantic equality rather than syntactic equality.

In~\cite{unificationpaper} Arusoaie et al. show that in a sorted, polyadic variant of matching logic~\cite{ROSU2017LMCS}, semantic unification can be derived from syntactic unification: tailoring a well-known rule-based unification algorithm, they obtain the most general unifier substitution $\sigma$, from which they trivially construct a semantic unifier $\phi^\sigma$ and then they show that for patterns $t_1$ and $t_2$, $t_1 \sigma = t_2 \sigma$ implies $\phi^\sigma \leftrightarrow t_1 = t_2$, allowing the equality pattern to be replaced by the unification pattern due to the congruence rule admitted by matching logic.

Following the footsteps of Arusoaie et al. we adapt the syntactic unification algorithm to the applicative, unsorted variant of matching logic and derive the semantic unifier from the solution. Similarly to their work, we assume that the theory specifies a term algebra and the patterns represent constrained first-order terms (i.e., symbols are injective constructor functions). Furthermore, derived from the soundness of the semantic unification, we show that $t_1 \land t_2 = t_1 \land \phi^\sigma$, which allows multiple term patterns to be merged into a constrained term pattern, enabling more effective pattern matching and utilization of SMT solvers. The latter is based on the fact that in matching logic, $t_1 \land t_2$ is equivalent to $t_1 \land t_1 = t_2$, that is, the conjunction of two term patterns can be replaced by one of the term patterns and the predicate stating their semantic equality---such patterns are common in automatic theorem proving in the K framework.

However, compared to related work, the novelty of our approach lies in the fact that we prove the soundness of the unification on a fully syntactical basis, using a sound sequent calculus for matching logic~\cite{proofmodepaper}. Consequently, unlike in the work of Arusoaie et al., we do not need to synthesize proofs for particular unifications as our constructive soundness proof justifies the derivability of the equivalence between the unified patterns and the semantic unifier.

In this work we make the following contributions:

\begin{itemize}
\item The definition of a generalised, abstract unification problem;
\item A rule-based unification algorithm for the unsorted, applicative variant of matching logic;
\item Proof of the soundness of the unification algorithm, using a single-conclusion sequent calculus;
\item Machine-checked implementation of the above results.
\end{itemize}

The rest of the paper is structured as follows. In \Cref{sec:background} we introduce matching logic and the unification problem for first-order terms. Then in \Cref{sec:related} we discuss existing solutions to solving unification and the related work for matching logic in particular. Thereafter, \Cref{sec:unification} defines abstract unification problems, a rule-based solution, and we present the proof of the soundness of the solution. Finally, \Cref{sec:conclusion} summarises the results and concludes.

\section{Background}
\label{sec:background}

In this section, we provide a general introduction to matching logic~\cite{mlexplained} and we also recall some theorems and meta-theorems of matching logic that we will build upon when proving properties of unifier patterns. At the end of this section, we overview the unification problem for first-order terms in general.

\subsection{Matching Logic}

This section introduces an applicative variant of matching logic~\cite{mlexplained,matchingmulogic} which is the language hosting the unification problem explained in this paper.

\subsubsection{Syntax}

Matching logic is a minimal yet expressive language. Its syntax is parametric in the so-called signature, containing constant symbols and variables.

\begin{definition}[Signature]
In a simplified setting, a matching logic signature is a pair $(\EVar,
\Sigma)$, where
\begin{itemize}
\item $\EVar$ is a countably infinite set of element variables (denoted with $x, y, \dots$);
\item $\Sigma$ is a countable set of constant symbols (denoted with $f, g, \dots$).
\end{itemize}
Whenever $\EVar$ is understood from the context, only $\Sigma$ is used to denote the whole signature.
\end{definition}

The only syntactic category of the language is \emph{patterns}, parametrised by a particular signature.

\begin{definition}[Pattern]
Given a signature $(\EVar,
 \Sigma)$, the following rules define patterns:
\begin{equation*}
\varphi ::= x 
\mid f \mid \app{\varphi_1}{\varphi_2} \mid \bot \mid \imp{\varphi_1}{\varphi_2} \mid \expat{x}{\varphi}
\end{equation*}
These constructs are in order: (element) variables, symbols, application of patterns, falsum (bottom) and the traditional first-order operations (implication and existential quantification). Application is left associative and binds the tightest. Implication is right associative. The scope of the $\exists$ binder extends as far to the right as possible. We use $\FV{\varphi}$ to refer to the set of free variables of some pattern $\varphi$.
\end{definition}

Note that the full version of matching logic also contains least-fixpoint ($\mu$) patterns~\cite{mlexplained}, but they are not relevant for this work, thus for the sake of readability we omit them from the current presentation. The reason for this is that most of the patterns presented here are term patterns, which do not contain fixpoints syntactically. We note that the formalisation~\cite{unifpr} includes $\mu$-patterns too, but some mechanised theorems and rules (e.g., \ref{rule:rewritelocal}) are restricted to work with $\mu$-free patterns currently.

Matching logic is intentionally minimal, but can derive several other, well-known constructs as syntactic sugar. For instance, the following lines define negation, disjunction, conjunction, top, equivalence, and universal quantification.

\begin{align*}
	\lnot \varphi &\defines \imp{\varphi}{\bot} & \varphi_1 \lor \varphi_2 &\defines \lnot \imp{\varphi_1}{\varphi_2} \\
	\varphi_1 \land \varphi_2 &\defines \lnot (\lnot \varphi_1 \lor \lnot \varphi_2) & \top &\defines \lnot \bot \\
	\varphi_1 \leftrightarrow \varphi_2 &\defines (\varphi_1 \rightarrow \varphi_2) \land (\varphi_2 \rightarrow \varphi_1) & \allpat{x}{\varphi} &\defines \lnot \expat{x}{\lnot \varphi}
\end{align*}

We define substitutions on patterns in the usual way.

\begin{definition}[Substitution]\label{def:normalsubs}
We denote substitutions as $\subst{\varphi}{x}{\psi}$ to say that we replace every free occurrence of $x$ in $\varphi$ with $\psi$.
\end{definition}

Special patterns are pattern contexts and application contexts, which are always used in substitutions.

\begin{definition}[Pattern context, application context]
A pattern context, denoted with $C$ is a pattern with a distinguished variable $\square$, called a hole. When substituting	$\varphi$ into a hole, we simply write $\ctxsubst{C}{\varphi}$ denoting $\subst{C}{\square}{\varphi}$. When a context contains only applications from the root to the (only) $\square$, we call it an application context. Application contexts are defined with the following grammar:
\begin{equation*}
\Capp := \square \mid \app{\varphi}{\Capp} \mid \app{\Capp}{\varphi}
\end{equation*}
\end{definition}

Contexts are useful for defining theorems that are only concerned with a smaller part of a pattern while the rest is irrelevant. Contexts are used for example to substitute equal patterns inside a bigger context (see rule \ref{rule:rewritelocal}). This definition of contexts is slightly different from the definition used in~\cite{mlexplained,matchingmulogic}, but the same proofs can be expressed with both variants\footnote{This correspondence is proven in the Coq formalisation~\cite[\texttt{matching-logic/src/ProofMode/Misc.v}]{unifpr}.}.

\subsubsection{Semantics}

Matching logic semantics is based on pattern matching. Patterns are
interpreted as a set of domain elements that match them.

\begin{definition}[Model]
A model is a tuple $(M, \_{\cdot}\_, M_f)$, where
\begin{itemize}
\item $M$ is the non-empty carrier set (domain), which contains all elements a pattern may evaluate to,
\item $\_{\cdot}\_ : M \rightarrow M \rightarrow \mathcal P(M)$ is a binary function that serves as the interpretation of application,
\item $M_f \subseteq M$ is the (indexed) interpretation of symbols $f \in \Sigma$.
\end{itemize}
\end{definition}

We extend the application function to subsets of the domain in a pointwise manner:
\begin{equation*}
	A \cdot B = \bigcup_{\mathclap{\substack{a \in A \\ b \in B}} } a \cdot b
\end{equation*}

Let us define the semantics of matching logic formulas informally. Given a signature and a model, we can define interpretations for
patterns. Element variables are interpreted as singleton sets, while the meaning of symbols is given by $M_f$. In the case of application, the meanings of the subpatterns are combined using the $\_{\cdot}\_$ function. $\bot$ is matched by the empty set and the implication $\imp{\varphi_1}{\varphi_2}$ is matched by elements that match $\varphi_2$ if they match $\varphi_1$. $\expat{x}{\varphi}$ is matched by all instances of $\varphi$ as $x$ ranges over $M$. This means, for example, that $\expat{x}{x}$ is matched by $M$. The semantics of the derived operations follow from these.

\begin{definition}[Semantic consequence]
A pattern is validated by a model if the pattern evaluates to the entire carrier set of the model. A model validates a set of patterns, if it validates all patterns in the set. In particular, we use $\Gamma \vDash \varphi$ to denote that the patterns in $\Gamma$ (i.e., the set of axioms) validate a pattern $\varphi$ whenever all models validating $\Gamma$ also validate $\varphi$.
\end{definition}

\begin{definition}[Term patterns and predicate patterns]
We distinguish two classes of patterns that will be important in the later sections of this paper. Specifically, \emph{term patterns} evaluate to singleton sets (they behave like terms in first-order logic), and \emph{predicate patterns} evaluate either to the empty set or to the full carrier set (they mimic formulas in first-order logic).
\end{definition}

\subsubsection{Sequent Calculus for Matching Logic}

Matching logic has a sound Hilbert-style proof system~\cite{mlexplained} with a handful of simple
rules, and a sequent calculus~\cite{proofmodepaper} which is sound and complete w.r.t. the proof system. In this section, we briefly summarize this calculus, and we construct our proofs with it in later sections. We denote provability in the Hilbert-system with $\provedc{\Gamma}{c}{\varphi}$, and next we define sequents.

\begin{definition}[Sequents]
  \label{def:sequent}
	A \emph{sequent} is a triple $\sequent{\Gamma}{\Delta}{\psi}{c} \,$ \footnote{Note that $\deduct$ denotes a derivation in the sequent calculus, whereas $\hilbert$ denotes a derivation in the Hilbert-style proof system.}, where
  \begin{itemize}
  \item $\Gamma$ is a (possibly infinite) set of patterns, called a \emph{theory};
  \item $\Delta$ is a finite (comma-separated) list of patterns, called \emph{antecedent} or \emph{local context};
  \item $\psi$ is a pattern, called \emph{succedent} or \emph{conclusion}.
  \end{itemize}
\end{definition}

\begin{figure}[t]
    \begin{subfigure}{\textwidth}
        \centering
        \begin{prooftree}
            \hypo{\provedc{\Gamma}{c}{\psi}}
            \infer1[\labeltext{\textsc{Inherit}}{rule:inherit}]{\sequent{\Gamma}{[]}{\psi}{c}}
        \end{prooftree}
        \hfill
        \begin{prooftree}
            \hypo{\sequent{\Gamma}{\Delta_1, \Delta_2}{\psi}{c}}
            \infer1[\labeltext{\textsc{Weaken}}{rule:clear}]{\sequent{\Gamma}{\Delta_1, \varphi, \Delta_2}{\psi}{c}}
        \end{prooftree}
        \hfill
        \begin{prooftree}
            \hypo{\sequent{\Gamma}{\Delta_1}{\varphi}{c}}
            \hypo{\sequent{\Gamma}{\Delta_1,\varphi,\Delta_2}{\psi}{c}}
            \infer2[\labeltext{\textsc{Cut}}{rule:assert}]{\sequent{\Gamma}{\Delta_1,\Delta_2}{\psi}{c}}
        \end{prooftree}
        \caption{Technical and structural inference rules}
        \vspace*{0.5cm}
    \end{subfigure}
    \begin{subfigure}{\textwidth}
        \centering
        \begin{prooftree}
            \infer0[\labeltext{\textsc{Hyp}}{rule:exact}]{\sequent{\Gamma}{\Delta_1,\varphi,\Delta_2}{\varphi}{c}}
        \end{prooftree}
        \hfill
        \begin{prooftree}
            \infer0[\labeltext{$(\vdash\bot)$}{rule:botelim}]{\sequent{\Gamma}{\Delta_1,\bot,\Delta_2}{\psi}{c}}
        \end{prooftree}
        
        \mkspace
        
        \begin{prooftree}
            \hypo{\sequent{\Gamma}{\Delta_1, \Delta_2}{\varphi_1}{c}}
            \hypo{\sequent{\Gamma}{\Delta_1, \varphi_2, \Delta_2}{\psi}{c}}
            \infer2[\labeltext{$(\to\vdash)$}{rule:elimimpl}]{\sequent{\Gamma}{\Delta_1, \varphi_1 \to \varphi_2, \Delta_2}{\psi}{c}}
        \end{prooftree}
        \hfill
        \begin{prooftree}
            \hypo{\sequent{\Gamma}{\Delta, \varphi}{\psi}{c}}
            \infer1[\labeltext{$(\vdash\to)$}{rule:introimpl}]{\sequent{\Gamma}{\Delta}{\varphi \to \psi}{c}}
        \end{prooftree}
        
        \mkspace
        
        \begin{prooftree}
            \hypo{\sequent{\Gamma}{\Delta_1, \varphi_1, \varphi_2, \Delta_2}{\psi}{c}}
            \infer1[\labeltext{$(\land\vdash)$}{rule:destructand}]{\sequent{\Gamma}{\Delta_1, \varphi_1 \land \varphi_2, \Delta_2}{\psi}{c}}
        \end{prooftree}
        \hfill
        \begin{prooftree}
            \hypo{\sequent{\Gamma}{\Delta}{\psi_1}{c}}
            \hypo{\sequent{\Gamma}{\Delta}{\psi_2}{c}}
            \infer2[\labeltext{$(\vdash\land)$}{rule:splitand}]{\sequent{\Gamma}{\Delta}{\psi_1 \land \psi_2}{c}}
        \end{prooftree}
        
        \mkspace
        
        \begin{minipage}{.65\linewidth}
            \begin{prooftree}
                \hypo{\sequent{\Gamma}{\Delta_1, \varphi_1, \Delta_2}{\psi}{c}}
                \hypo{\sequent{\Gamma}{\Delta_1, \varphi_2, \Delta_2}{\psi}{c}}
                \infer2[\labeltext{$(\lor\vdash)$}{rule:destructor}]{\sequent{\Gamma}{\Delta_1, \varphi_1 \lor \varphi_2, \Delta_2}{\psi}{c}}
            \end{prooftree}
        \end{minipage}
        \begin{minipage}{.31\linewidth}
            \begin{prooftree}
                \hypo{\sequent{\Gamma}{\Delta}{\psi_1}{c}}
                \infer1[\labeltext{$(\vdash\lor_{L})$}{rule:leftor}]{\sequent{\Gamma}{\Delta}{\psi_1 \lor \psi_2}{c}}
            \end{prooftree}
            
            \mkspace
            
            \begin{prooftree}
                \hypo{\sequent{\Gamma}{\Delta}{\psi_2}{c}}
                \infer1[\labeltext{$(\vdash\lor_{R})$}{rule:rightor}]{\sequent{\Gamma}{\Delta}{\psi_1 \lor \psi_2}{c}}
            \end{prooftree}
        \end{minipage}
        \caption{Inference rules for propositional reasoning}
        \vspace*{0.5cm}
    \end{subfigure}
    
    \begin{subfigure}{\textwidth}
        \centering
        
        \begin{prooftree}
            \hypo{\sequent{\Gamma}{\Delta_1, \subst{\varphi}{x}{y}, \Delta_2}{\psi}{c}}
            \infer1[\labeltext{$(\forall\vdash)$}{rule:elimall}]{\sequent{\Gamma}{\Delta_1, \forall x.\, \varphi, \Delta_2}{\psi}{c}}
        \end{prooftree}
        \hfill
        \begin{prooftree}
            \hypo{\sequent{\Gamma}{\Delta}{\subst{\psi}{x}{y}}{c}}
            \infer1[$y \notin \FV{\Delta, \forall x.\, \psi}\quad$\labeltext{$(\vdash\forall)$}{rule:introall}]{\sequent{\Gamma}{\Delta}{\forall x .\, \psi}{c}}
        \end{prooftree}
        
        \mkspace
        
        \begin{prooftree}
            \hypo{\sequent{\Gamma}{\Delta_1, \subst{\varphi}{x}{y}, \Delta_2}{\psi}{c}}
            \infer1[$y \notin \FV{\Delta, \exists x.\, \varphi, \psi}\quad$\labeltext{$(\exists\vdash)$}{rule:elimex}]{\sequent{\Gamma}{\Delta_1, \exists x.\, \varphi, \Delta_2}{\psi}{c}}
        \end{prooftree}
        \hfill
        \begin{prooftree}
            \hypo{\sequent{\Gamma}{\Delta}{\subst{\psi}{x}{y}}{c}}
            \infer1[\labeltext{$(\vdash\exists)$}{rule:introex}]{\sequent{\Gamma}{\Delta}{\exists x.\, \psi}{c}}
        \end{prooftree}
        \caption{Inference rules for first-order reasoning}
    \end{subfigure}
    
    \caption{Sequent calculus for matching logic (theory-independent rules)}
    \label{fig:PropSequentRules}
\end{figure}

Next, we recall the sequent calculus in~\Cref{fig:PropSequentRules} and refer to~\cite{proofmodepaper} for further details. Essentially, these rules mimic a standard single-conclusion sequent calculus for first-order logic, except that they use a concept of \emph{local context} and do not affect the theory ($\Gamma$) directly. Informally, the meaning of the sequent $\sequent{\Gamma}{\varphi_1, \dots, \varphi_n}{\psi}{c}$ can be expressed as an expansion to $\provedc{\Gamma}{c}{\varphi_1 \to \dots \to \varphi_n \to \psi}$, which allows the premises to be separated from the conclusion without using a deduction theorem. With this notion in mind, we can see that the rule \ref{rule:introimpl} is not a deduction theorem of matching logic (we refer to \Cref{sec:definedness} for further details), but in practice it functions as a rule turning implication premises to hypotheses. The sequent calculus accommodates another rule for the deduction meta-theorem (see~\ref{rule:dt}), but in the general case (when the fixed-point operator is present in the language) that comes with technical side conditions~\cite{proofmodepaper}, making it more difficult to apply.

We also highlight the rule~\ref{rule:inherit}, which allows us to lift any proofs done with the Hilbert-style proof system ($\hilbert$) into the sequent calculus. In later sections, we use this rule implicitly to lift Hilbert-style proofs of theorems while presenting a sequent calculus proof (note that with the repeated use of \ref{rule:clear} we can unify the local context of the lifted theorem and the current proof state). We also note that it is also possible to express all sequent calculus proof in the Hilbert-style proof system too (\Cref{lem:correspondenceLemmaBasic}). For further insights on the correspondence between the calculus and the proof system, we refer to the paper on the calculus~\cite{proofmodepaper} and its implementation~\cite[\texttt{matching-logic/src/ProofMode}]{unifpr}.

Next, we encode the theory of definedness (also featuring equality) in matching logic, and extend the sequent calculus with rules using definedness and equality, notably deduction and rewriting.

\subsubsection{Definedness}\label{sec:definedness}

We recall the formal specification for the theory of \emph{definedness} from~\cite{mlexplained} in \Cref{fig:defined}. The section ``Symbol'' includes the new symbols added to the signature, while ``Notation'' describes the derived notations used in the theory. Finally, ``Axiom'' describes the axiom schemas that specify the symbols of the theory. In the rest of the paper, we will refer to the axiom set of the definedness theory with $\Gamma^\DEFINEDNESS$. \\

\begin{figure}
\noindent\fbox{\begin{minipage}{.98\textwidth}\small
	\textbf{spec} DEFINEDNESS\\
	\-\ \ Symbol: $\defined{\_}$\\
	\-\ \ Notation:

	\vspace*{-15mm}

	\begin{align*}
		\defined{\varphi} &\defines \app{\defined{\_}}{\varphi} & \total{\varphi} &\defines \lnot\defined{\lnot\varphi} \\
		\varphi_1 = \varphi_2 &\defines \total{\varphi_1 \leftrightarrow \varphi_2} & \varphi_1 \ne \varphi_2 &\defines \lnot (\varphi_1 = \varphi_2) \\
		\varphi_1 \in \varphi_2 &\defines \defined{\varphi_1 \land \varphi_2} & \varphi_1 \notin \varphi_2 &\defines \lnot (\varphi_1 \in \varphi_2) \\
		\varphi_1 \subseteq \varphi_2 &\defines \total{\varphi_1 \rightarrow \varphi_2} & \varphi_1 \nsubseteq \varphi_2 &\defines \lnot (\varphi_1 \subseteq \varphi_2)
	\end{align*}

	\vspace*{-5mm}

	\-\ \ Axiom:\\
	\-\ \ \ \ (\textsc{Definedness})\ \ \ $\defined{x}$\\
	\textbf{endspec}
\end{minipage}}
\caption{Specification of definedness}
\label{fig:defined}
\end{figure}

This theory introduces a symbol $\defined{\_}$, a number of notations on top of this symbol, and one axiom which embodies the meaning of the definedness symbol. The axiom states that all (element) variables are defined, i.e., the definedness applied to a variable is always satisfied. Intuitively, $\defined{\varphi}$ is satisfied if $\varphi$ matches at least one element. Based on definedness, we can derive \emph{totality} (denoted as $\total{\varphi}$), and the usual notion of equality, membership and set inclusion. Intuitively, $\total{\varphi}$ is satisfied if $\varphi$ matches all elements of the domain.
With equality, we can syntactically express when a pattern is a term pattern (also called functional pattern), or a predicate pattern. We can also show when symbol $f$ behaves like an $n$-ary function, or as an $n$-ary predicate:

\begin{align*}
    \expat{x}{\varphi = x}\label{ax:functional}\tag{Functional Pattern}\\
    \varphi = \bot \lor \varphi = \top \label{ax:predicate}\tag{Predicate Pattern}\\
    \allpat{x_1}{\dots \allpat{x_n}{\expat{y}{\app{\app{f}{x_1}}{ \cdots x_n} = y}}}    \label{ax:fun}\tag{Function}\\
    \allpat{x_1}{\dots \allpat{x_n}{\app{\app{g}{x_1}}{ \cdots x_n} = \top \lor\app{\app{g}{x_1}}{ \cdots x_n} = \bot }}    \label{ax:pred}\tag{Predicate}	
\end{align*}

For reasoning about equality, we use the rules presented in \Cref{fig:DefSequentRules}. All rules of the sequent calculus are proven sound in~\cite{proofmodepaper}. Note that both the deduction rule and the rewriting rule (replacing equal patterns) are based on the deduction theorem and the equality elimination theorem, which in their general form come with complex technical side conditions intentionally neglected in this presentation; for details about the technical side conditions, we refer to~\cite{matchingmulogic}.

\renewcommand*{\thefootnote}{\fnsymbol{footnote}}

\begin{figure}[htb]
  \begin{prooftree}
      \hypo{\sequent{\Gamma \cup \{ \varphi \}}{\Delta}{\psi}{c}\ \footnotemark}
      \infer1[\labeltext{\textsc{Deduction}}{rule:dt}]{\sequent{\Gamma}{\total{\varphi}, \Delta}{\psi}{\TopConstraint}}
  \end{prooftree}
  \hspace{1em}
  \begin{prooftree}
    \infer0[\labeltext{$(\vdash=)$}{rule:reflexivity}]{\sequent{\Gamma}{\Delta}{\psi = \psi}{\TopConstraint}}
  \end{prooftree}
  \hspace{1em}
  \begin{prooftree}
      \hypo{\sequent{\Gamma}{\Delta_1, \varphi_1 = \varphi_2, \Delta_2}{\ctxsubst{C}{\varphi_2}}{\TopConstraint}}
      \infer1[\labeltext{$(=\vdash)$}{rule:rewritelocal}]{\sequent{\Gamma}{\Delta_1, \varphi_1 = \varphi_2, \Delta_2}{\ctxsubst{C}{\varphi_1}}{\TopConstraint}}
  \end{prooftree}    
  \caption{Deduction; and rules about equality. These rules assume that $\Gamma^\DEFINEDNESS \subseteq \Gamma$.}
  \label{fig:DefSequentRules}
\end{figure}

\footnotetext{The proof does not use existential generalization on free variables of $\psi$.}

\renewcommand*{\thefootnote}{\arabic{footnote}}

Next, we state important theorems and meta-theorems of matching logic which we will rely on when reasoning about the correctness of unification (we refer to~\cite{bereczky2022mechanizing,mlexplained} for detailed explanations and proofs).

\subsubsection{Essential Theorems}

First of all, we state the correspondence theorem that allows us to use Hilbert-style proofs and sequent calculus proofs interchangeably. In particular, this allows us to reason about Hilbert-style provability by using the sequent calculus.

\begin{theorem}[Correspondence]\label{lem:correspondenceLemmaBasic}
    For all theories $\Gamma$, and patterns $\varphi_1, \dots, \varphi_k, \psi$, the derivability statement
    $\sequent{\Gamma}{\varphi_1,\ldots,\varphi_k}{\psi}{c}$ holds if and only if
    $\provedc{\Gamma}{c}{\varphi_1 \to \dots \to \varphi_k \to \psi}$.
\end{theorem}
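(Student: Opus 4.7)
The plan is to prove both directions separately; each rests on showing that one proof format can simulate the primitive moves of the other.

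For the right-to-left direction I would start by applying \ref{rule:inherit} to the Hilbert-style derivation of $\varphi_1 \to \dots \to \varphi_k \to \psi$, obtaining the sequent $\sequent{\Gamma}{[]}{\varphi_1 \to \dots \to \varphi_k \to \psi}{c}$. Then I would repeatedly weaken with \ref{rule:clear} to introduce $\varphi_1, \ldots, \varphi_k$ into the local context, yielding $\sequent{\Gamma}{\varphi_1,\ldots,\varphi_k}{\varphi_1 \to \dots \to \varphi_k \to \psi}{c}$. Finally I would apply \ref{rule:assert} (cut) with the same pattern together with a $k$-fold application of \ref{rule:elimimpl}, discharging each premise with \ref{rule:exact} since $\varphi_i$ already sits in the local context. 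This is essentially a purely structural manipulation.

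For the left-to-right direction I would proceed by induction on the derivation of $\sequent{\Gamma}{\varphi_1,\ldots,\varphi_k}{\psi}{c}$, showing that the corresponding curried implication is Hilbert-provable for each rule. The propositional cases (\ref{rule:elimimpl}, \ref{rule:introimpl}, \ref{rule:splitand}, \ref{rule:destructor}, \ldots) reduce to propositional tautologies combined with modus ponens; \ref{rule:exact} and $(\vdash\bot)$ are handled directly. The quantifier rules (\ref{rule:elimex}, \ref{rule:introall}) require lifting the fresh-variable side conditions to justify the applicable Hilbert first-order rules, while the equality rules (\ref{rule:reflexivity}, \ref{rule:rewritelocal}) follow from the corresponding equality lemmas derived from the definedness axiom. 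The base case \ref{rule:inherit} is immediate, and \ref{rule:clear}/\ref{rule:assert} correspond to weakening and cut, both admissible in the Hilbert system.

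I expect the hardest case to be \ref{rule:dt}, because it moves a pattern from the local context into the theory $\Gamma$, while on the Hilbert side the deduction meta-theorem for matching logic is only admissible under the technical side condition footnoted in \Cref{fig:DefSequentRules} (no existential generalization over free variables of $\psi$). The trick is that the induction hypothesis at \ref{rule:dt} hands us a Hilbert proof under the enriched theory $\Gamma \cup \{\varphi\}$, and we must turn it into a proof of $\total{\varphi} \to \dots \to \psi$ under $\Gamma$; this is precisely the content of matching logic's restricted deduction theorem, whose side condition matches the one annotated on \ref{rule:dt}. Once that is invoked, the rest of the induction threads through routinely, and combining both directions establishes the stated equivalence.
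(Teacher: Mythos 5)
Your proposal is correct in outline, but it takes a genuinely different route from the one the paper relies on. The paper does not prove this theorem at all: it recalls it from the proof-mode work \cite{proofmodepaper}, where the sequent $\sequent{\Gamma}{\varphi_1,\dots,\varphi_k}{\psi}{c}$ is not an independent inductive judgment but is introduced as (an expansion of) the Hilbert statement $\provedc{\Gamma}{c}{\varphi_1 \to \dots \to \varphi_k \to \psi}$ itself --- the paper says as much just after \Cref{def:sequent}. Under that reading the correspondence is essentially definitional, and the real work is distributed elsewhere, namely in establishing each rule of \Cref{fig:PropSequentRules} and \Cref{fig:DefSequentRules} as a derived/admissible rule of the Hilbert system. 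Your left-to-right induction is precisely that work repackaged: each case of your induction is the admissibility proof of the corresponding sequent rule, including the delicate ones you single out --- the restricted deduction meta-theorem for \ref{rule:dt} (whose side condition indeed matches the footnote in \Cref{fig:DefSequentRules}) and equality elimination for \ref{rule:rewritelocal}. Your right-to-left direction via \ref{rule:inherit}, \ref{rule:clear}, \ref{rule:assert} and $k$-fold \ref{rule:elimimpl} discharged by \ref{rule:exact} is a sound structural derivation, though it becomes a one-liner once the sequent is read definitionally. What your approach buys is generality: it would survive a presentation in which the sequent calculus is a primitive inductive definition. What the paper's setup buys is that the theorem costs nothing beyond the soundness of the individual rules, which must be proved anyway. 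One small caveat for your version: you should track the constraint annotation, since \ref{rule:dt} and the equality rules fix it to $\TopConstraint$ rather than an arbitrary $c$, so the induction statement needs to be phrased to accommodate that.
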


Now we state fundamental (meta-)theorems heavily used in the subsequent sections.

\begin{theorem}[Congruence]
    For all patterns $\varphi_1, \varphi_2$, pattern contexts $C$ and theory $\Gamma$, if $\Gamma \vdash \varphi_1 \leftrightarrow \varphi_2$ then $\Gamma \vdash \ctxsubst{C}{\varphi_1} \leftrightarrow \ctxsubst{C}{\varphi_2}$.
\end{theorem}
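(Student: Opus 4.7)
The plan is to prove the statement by structural induction on the pattern context $C$, establishing along the way three ``one-step'' congruence lemmas, one for each binary/binding pattern-forming construct that may wrap the hole. Since application contexts are a special case of pattern contexts, I will not split that case out.

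In the base case $C = \square$ we have $\ctxsubst{C}{\varphi_i} = \varphi_i$, so the conclusion is literally the hypothesis. For the inductive step, I would first dispatch the purely propositional cases $C = \imp{C'}{\psi}$ and $C = \imp{\psi}{C'}$: by the induction hypothesis $\Gamma \vdash \ctxsubst{C'}{\varphi_1} \leftrightarrow \ctxsubst{C'}{\varphi_2}$, unfold $\leftrightarrow$ into two conjoined implications via \ref{rule:splitand} and \ref{rule:destructand}, and close both directions with \ref{rule:introimpl} and \ref{rule:elimimpl} together with \ref{rule:exact}. The quantifier case $C = \expat{x}{C'}$ is similarly routine: apply \ref{rule:introex} after \ref{rule:elimex}, using a fresh witness variable, and appeal to the induction hypothesis, being careful that the hole $\square$ is a fresh variable so no capture occurs in the substitution.

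The main obstacle, as always in matching logic, is the application case $C = \app{C'}{\psi}$ (and symmetrically $\app{\psi}{C'}$). Here the induction hypothesis $\Gamma \vdash \ctxsubst{C'}{\varphi_1} \leftrightarrow \ctxsubst{C'}{\varphi_2}$ does not immediately propagate through application by first-order reasoning, because application is not a logical connective: one needs the \emph{framing} property of matching logic, i.e., that if $\Gamma \vdash \chi_1 \to \chi_2$ then $\Gamma \vdash \app{\chi_1}{\psi} \to \app{\chi_2}{\psi}$ and $\Gamma \vdash \app{\psi}{\chi_1} \to \app{\psi}{\chi_2}$. I would invoke framing in both directions of the biconditional obtained from the induction hypothesis, then re-assemble the two implications into $\ctxsubst{C}{\varphi_1} \leftrightarrow \ctxsubst{C}{\varphi_2}$ with \ref{rule:splitand}.

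Since framing is a basic admissible rule of the Hilbert system (derived from the application-related propagation axioms of matching logic), these derivations can all be performed purely at the Hilbert level and then lifted into the sequent calculus via \ref{rule:inherit}; alternatively, one can formulate each of the three one-step congruence lemmas directly as admissible sequent rules and chain them by induction on $C$. I expect the only real subtlety, beyond framing, to be bookkeeping about free variables and capture avoidance in the existential case, which is handled by choosing a witness fresh for $\Delta$, $\varphi_1$, $\varphi_2$ and $C'$.
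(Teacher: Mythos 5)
The paper does not prove this theorem itself --- it recalls it as a known meta-theorem of matching logic and defers to the cited references --- but your proof by structural induction on the context $C$, using framing for the application case, propositional reasoning for implication, and a fresh-witness argument for the existential case, is exactly the standard argument given there, and it is correct (modulo the trivial omitted base cases $C = x$, $C = f$, $C = \bot$, where the claim is reflexivity). One small correction: in the applicative matching logic proof system, framing is a \emph{primitive} inference rule of the Hilbert system rather than something derived from the propagation axioms (those govern $\bot$, $\lor$ and $\exists$ under application), but this only simplifies your argument.
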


\begin{lemma} \label{phiimpldefphi}
	Let $\Gamma$ be a theory and $\varphi$ a pattern. Then $\Gamma \vdash
	\varphi \rightarrow \defined \varphi$.
\end{lemma}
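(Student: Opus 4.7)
The claim says that any element matching $\varphi$ witnesses the non-emptiness of $\varphi$'s interpretation---semantically a tautology. The plan is to reduce the general statement to the case of a singleton, where the definedness axiom $\defined{x}$ applies directly. Throughout, fix a fresh element variable $x \notin \FV{\varphi}$.

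First, I would establish the ``singleton decomposition'' $\varphi \leftrightarrow \exists x.\, (x \land \varphi)$, which is a standard first-order equivalence: $x$ ranges over the whole carrier, so $\exists x.\, x \leftrightarrow \top$, and one then propagates via $(\land)$. The proof is routine using \ref{rule:elimex}, \ref{rule:introex}, \ref{rule:destructand}, \ref{rule:splitand} together with basic propositional steps. Next, I would establish the propagation identity $\defined{\exists x.\, \psi} \leftrightarrow \exists x.\, \defined{\psi}$, which is a consequence of the propagation of existential quantification through application contexts (a standard meta-theorem of matching logic).

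The core technical step is $x \land \varphi \rightarrow \defined{x \land \varphi}$ for fresh $x$. Here I would exploit the singleton nature of $x$: the pattern $x \land \varphi$ is either empty or equal to the singleton $\{x\}$, and in the latter case $\defined{x \land \varphi}$ collapses to $\defined{x}$, which is the definedness axiom. Syntactically, the idea is to combine the propositional implication $x \land \varphi \to x$, the axiom $\defined{x}$, and the monotonicity of $\defined{}$ under \ref{rule:rewritelocal}, mediated by the identification $x \in \varphi \equiv \defined{x \land \varphi}$ from the definedness specification.

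Finally, I would chain the pieces using congruence of $\defined{}$ and the quantifier rules to obtain
\[
  \varphi \;\leftrightarrow\; \exists x.\, (x \land \varphi) \;\rightarrow\; \exists x.\, \defined{x \land \varphi} \;\leftrightarrow\; \defined{\exists x.\, (x \land \varphi)} \;\leftrightarrow\; \defined{\varphi},
\]
where the middle implication is the core step applied under the existential, and the final equivalence is congruence applied to step one inside the definedness context. The hard part is the core step: while semantically immediate---existence of a match \emph{is} the definition of definedness---a clean syntactic derivation requires careful bookkeeping of the propagation and monotonicity properties of $\defined{}$, and will most likely proceed via an auxiliary lemma about the definedness of patterns guarded by an element variable rather than being unfolded inline.
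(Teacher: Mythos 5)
Your overall architecture --- decompose $\varphi$ as $\exists x.\,(x \land \varphi)$ for a fresh $x$, push $\defined{\_}$ through the existential, and reduce everything to a statement about the element variable $x$ where the \textsc{Definedness} axiom applies --- is consistent with the only justification the paper itself gives (it merely remarks that the lemma ``is a consequence of the \textsc{Definedness} axiom and existential generalization'' and supplies no detailed derivation). Your bracketing steps are fine: $\varphi \leftrightarrow \exists x.\,(x\land\varphi)$ follows from $\vdash \exists x.\,x$ and standard quantifier reasoning, and $\defined{\exists x.\,\psi} \leftrightarrow \exists x.\,\defined{\psi}$ is propagation of $\exists$ through an application context.

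The genuine gap is in your core step $x \land \varphi \to \defined{x\land\varphi}$. The ingredients you list cannot produce it: monotonicity (framing) of $\defined{\_}$ applied to $x \land \varphi \to x$ yields $\defined{x\land\varphi} \to \defined{x}$, which is the \emph{converse} of what you need --- $x\land\varphi$ denotes a subset of what $x$ denotes, so the axiom $\defined{x}$ gives no information about $\defined{x\land\varphi}$ by monotonicity alone. (Moreover, \ref{rule:rewritelocal} is rewriting by a proved equality, not a monotonicity principle, so it cannot mediate this step.) What actually turns the semantic observation ``$x\land\varphi$ is empty or the singleton $\{x\}$'' into syntax is the Singleton Variable rule of the Hilbert system, $\lnot(C_1[x\land\psi]\land C_2[x\land\lnot\psi])$: instantiating $C_1:=\square$, $C_2:=\defined{\square}$, $\psi:=\varphi$ gives $x\land\varphi \to \lnot\defined{x\land\lnot\varphi}$, and combining this with the case split $\defined{x} \to \defined{x\land\varphi}\lor\defined{x\land\lnot\varphi}$ (obtained by propagating $\lor$ through the application context $\defined{\_}$) and the axiom $\defined{x}$ yields $x\land\varphi\to\defined{x\land\varphi}$ by disjunctive syllogism. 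Without the Singleton rule, or an equivalent membership-introduction lemma, the core step does not go through; your own caveat that this is ``the hard part'' is accurate, but the missing ingredient is this specific rule, not merely bookkeeping of propagation and monotonicity.
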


Note that~\Cref{phiimpldefphi} is a consequence of the \textsc{Definedness} axiom and existential generalization.

\begin{lemma} \label{membimplequal}
	Let $\Gamma$ be a theory and $\varphi_1$ and $\varphi_2$ functional
	patterns. Then $\Gamma \vdash \varphi_1 \in \varphi_2 \rightarrow
	\varphi_1 = \varphi_2$.
\end{lemma}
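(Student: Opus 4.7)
The plan is to unfold $\in$ and $=$, introduce element-variable witnesses using the functional-pattern assumptions, and reduce the claim to a standard matching logic fact about variables. Recall that $\varphi_1 \in \varphi_2 \defines \defined{\varphi_1 \land \varphi_2}$ and $\varphi_1 = \varphi_2 \defines \total{\varphi_1 \leftrightarrow \varphi_2}$, so after applying \ref{rule:introimpl} it suffices to derive $\varphi_1 = \varphi_2$ from the hypothesis $\defined{\varphi_1 \land \varphi_2}$.

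Since $\varphi_1$ and $\varphi_2$ are functional, the statements $\expat{x}{\varphi_1 = x}$ and $\expat{y}{\varphi_2 = y}$ are available for suitably fresh $x, y$. I insert them into the local context via \ref{rule:assert} and apply \ref{rule:elimex} twice to obtain hypotheses $\varphi_1 = x$ and $\varphi_2 = y$. Then, using \ref{rule:rewritelocal} with these equalities inside an appropriate application/conjunction context, I rewrite the definedness assumption to $\defined{x \land y}$, and I rewrite the goal down to $x = y$.

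This reduces the proposition to the auxiliary claim $\defined{x \land y} \vdash x = y$ for element variables, which is the main obstacle. It expresses the characteristic property of element variables, namely that they denote singletons, so the intersection of two such singletons is inhabited only if the variables coincide. It is a standard consequence of the \textsc{Definedness} axiom combined with equality elimination, existential generalization, and \Cref{phiimpldefphi}; once established as a separate sub-lemma (or inlined by extracting a common witness from the definedness of $x \land y$ and chaining $x = z = y$ by transitivity), the original goal follows immediately from the rewritten hypothesis.
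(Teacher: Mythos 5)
The paper itself states this lemma without proof, deferring to the cited references for the argument, so there is no in-paper proof to compare against; your reduction to the element-variable case via the \ref{ax:functional} axioms is indeed the standard route taken in those references. The first half of your sketch is sound modulo one technicality: \ref{rule:rewritelocal} rewrites the \emph{succedent}, not a hypothesis, so ``rewriting the definedness assumption to $\defined{x \land y}$'' must be simulated --- e.g.\ by symmetry on $\varphi_1 = x$ and $\varphi_2 = y$, asserting $\defined{x \land y}$ as an intermediate goal with \ref{rule:assert}, and rewriting that goal back to the hypothesis $\defined{\varphi_1 \land \varphi_2}$. That is routine.

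The genuine gap is the final step. The claim $\defined{x \land y} \rightarrow x = y$ for element variables is exactly where all the mathematical content of the lemma lives, and it is \emph{not} a consequence of the \textsc{Definedness} axiom together with equality elimination and existential generalization, nor of \Cref{phiimpldefphi} (which points in the opposite direction, from $\varphi$ to $\defined{\varphi}$). What makes the claim true is that element variables denote singletons, and the proof-theoretic counterpart of that fact is the Singleton Variable rule of the underlying Hilbert system, which this paper does not even display: instantiated with definedness contexts it yields $\lnot(\defined{x \land \varphi} \land \defined{x \land \lnot\varphi})$, i.e.\ $x \in \varphi \rightarrow x \subseteq \varphi$; taking $\varphi := y$, and again after commuting the conjunction under $\defined{\_}$ to obtain $y \in x \rightarrow y \subseteq x$, gives both inclusions and hence $x = y$ by antisymmetry of $\subseteq$. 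Your proposed inline alternative --- extracting a common witness $z$ from $\defined{x \land y}$ and chaining $x = z = y$ --- is a semantic intuition with no corresponding derivation: nothing in the calculus lets you name an element witnessing a definedness pattern. So the variable-case sub-lemma must be proved (or imported) explicitly with the Singleton Variable rule as the essential ingredient; with that in hand, the rest of your argument goes through.
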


The following theorem is used to extract and insert an extra condition into both sides of an equivalence; note that it also holds for $\varphi_1$ on the right side, by the commutativity of conjunction.

\begin{lemma} \label{ecfequiv}
	Let $\Gamma$ be a theory and $\varphi_1$, $\varphi_2$, and $\varphi_3$
	be patterns. Then $\Gamma \vdash (\varphi_1 \land \varphi_2
	\leftrightarrow \varphi_1 \land \varphi_3) \leftrightarrow (\varphi_1
	\rightarrow \varphi_2 \leftrightarrow \varphi_3)$.
\end{lemma}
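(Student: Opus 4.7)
The statement is a pure propositional tautology --- no definedness, equality, or first-order rules are needed --- so my plan is to discharge it entirely with the propositional fragment of the sequent calculus from \Cref{fig:PropSequentRules}, possibly via the correspondence theorem (\Cref{lem:correspondenceLemmaBasic}) if one wants to phrase the final statement at the Hilbert level. The goal has the form $A \leftrightarrow B$, which unfolds to $(A \to B) \land (B \to A)$, so the outermost move is \ref{rule:splitand} and then \ref{rule:introimpl} on each branch, producing two sub-goals: (I) from $\varphi_1 \land \varphi_2 \leftrightarrow \varphi_1 \land \varphi_3$ in the local context, prove $\varphi_1 \to (\varphi_2 \leftrightarrow \varphi_3)$; and (II) from $\varphi_1 \to (\varphi_2 \leftrightarrow \varphi_3)$ in the local context, prove $\varphi_1 \land \varphi_2 \leftrightarrow \varphi_1 \land \varphi_3$.

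For (I) I would first apply \ref{rule:introimpl} to push $\varphi_1$ into the context, then \ref{rule:destructand} on the hypothesis $(\varphi_1 \land \varphi_2 \to \varphi_1 \land \varphi_3) \land (\varphi_1 \land \varphi_3 \to \varphi_1 \land \varphi_2)$ to obtain the two implications separately. Another \ref{rule:splitand} on the conclusion reduces the task to $\varphi_2 \to \varphi_3$ and $\varphi_3 \to \varphi_2$. In each case I use \ref{rule:introimpl} to assume $\varphi_i$, combine it with the already-available $\varphi_1$ via \ref{rule:splitand} to obtain $\varphi_1 \land \varphi_i$, fire the appropriate implication with \ref{rule:elimimpl}, and then destruct the resulting conjunction with \ref{rule:destructand} to select the required $\varphi_j$, closed with \ref{rule:exact}.

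For (II) the structure is symmetric. After \ref{rule:introimpl}, the goal $\varphi_1 \land \varphi_2 \leftrightarrow \varphi_1 \land \varphi_3$ splits via \ref{rule:splitand} into two implications; in each direction, \ref{rule:introimpl} and \ref{rule:destructand} put $\varphi_1$ and $\varphi_i$ in the context. Applying the hypothesis $\varphi_1 \to (\varphi_2 \leftrightarrow \varphi_3)$ with \ref{rule:elimimpl} using the available $\varphi_1$ yields $\varphi_2 \leftrightarrow \varphi_3$, and one more \ref{rule:destructand} plus \ref{rule:elimimpl} in the correct direction produces $\varphi_j$. A final \ref{rule:splitand} reassembles $\varphi_1 \land \varphi_j$ and we close with \ref{rule:exact}.

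I do not anticipate any real obstacle: every step is mechanical and uses only the propositional rules, with no side conditions on free variables or on $\Gamma$. The only minor bookkeeping concern is keeping the local context tidy when unfolding the defined connective $\leftrightarrow$ into $(\cdot \to \cdot) \land (\cdot \to \cdot)$ in both the antecedent and the succedent; if this becomes visually heavy, an alternative is to prove the lemma in the Hilbert system as a propositional tautology and then lift it via \ref{rule:inherit}, appealing to \Cref{lem:correspondenceLemmaBasic} for the two-way translation.
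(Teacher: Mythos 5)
Your proof is correct: the statement is indeed a propositional tautology (when $\varphi_1$ holds both sides reduce to $\varphi_2 \leftrightarrow \varphi_3$, and when it fails both sides are trivially true), and every step of your sequent-calculus derivation uses only the side-condition-free propositional rules from \Cref{fig:PropSequentRules}, so the argument goes through without obstruction. Note that the paper states \Cref{ecfequiv} without any proof, deferring to the cited references, so there is no in-paper argument to compare against; your elementary derivation --- or your fallback of proving the tautology at the Hilbert level and lifting it via \ref{rule:inherit} and \Cref{lem:correspondenceLemmaBasic} --- is precisely the routine propositional reasoning the authors implicitly rely on.
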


\Cref{ecfequiv} can be extended to work with equality using deduction:

\begin{lemma} \label{ecfequal}
    For all theories $\Gamma$, predicate patterns $\varphi_1$, and patterns
    $\varphi_2, \varphi_3$, $\Gamma \vdash
    (\varphi_1 \land \varphi_2 = \varphi_1 \land \varphi_3) \leftrightarrow
    (\varphi_1 \rightarrow \varphi_2 = \varphi_3)$ holds.
\end{lemma}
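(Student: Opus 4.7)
The plan is to reduce the statement to two trivial sub-cases by exploiting the \ref{ax:predicate} characterisation of $\varphi_1$. Since $\varphi_1$ is a predicate pattern, we have $\Gamma \vdash \varphi_1 = \top \lor \varphi_1 = \bot$; using \ref{rule:assert} this disjunction can be introduced into the local context, and \ref{rule:destructor} then splits the proof into two branches in which $\varphi_1 = \top$ or $\varphi_1 = \bot$ is available as a local hypothesis.

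In each branch, \ref{rule:rewritelocal} applied with that hypothesis rewrites every occurrence of $\varphi_1$ in the goal by its concrete value. In the $\varphi_1 = \top$ branch the goal becomes
\[
(\top \land \varphi_2 = \top \land \varphi_3) \leftrightarrow (\top \to \varphi_2 = \varphi_3),
\]
and after further rewriting with the auxiliary equalities $\top \land \psi = \psi$ and $\top \to \psi = \psi$ both sides of the outer equivalence collapse to $\varphi_2 = \varphi_3$, so the goal follows by reflexivity and propositional reasoning. In the $\varphi_1 = \bot$ branch the goal becomes
\[
(\bot \land \varphi_2 = \bot \land \varphi_3) \leftrightarrow (\bot \to \varphi_2 = \varphi_3);
\]
using $\bot \land \psi = \bot$ and $\bot \to \psi = \top$, the left-hand side reduces to $\bot = \bot$ (provable by \ref{rule:reflexivity}) and the right-hand side reduces to $\top$, whence the outer equivalence is propositionally true.

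The main obstacle lies in supplying the auxiliary equalities such as $\top \land \psi = \psi$, $\bot \land \psi = \bot$, and $\bot \to \psi = \top$. At the level of $\leftrightarrow$ these are routine propositional tautologies, but \ref{rule:rewritelocal} demands them as equalities. To lift each equivalence to an equality we rely on the standard matching-logic meta-theorem that any pattern provable in $\Gamma$ is matched by every element in every model of $\Gamma$, so its totality is also provable; in particular $\Gamma \vdash \varphi \leftrightarrow \psi$ entails $\Gamma \vdash \varphi = \psi$. Once those small lemmas are settled, the remainder of the proof is a purely mechanical sequence of rewriting and propositional steps, with no further reliance on the semantics of $\varphi_1, \varphi_2, \varphi_3$ beyond the predicate assumption on $\varphi_1$.
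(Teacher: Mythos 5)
Your proof is correct in substance, but it takes a genuinely different route from the paper's. The paper obtains this lemma by extending \Cref{ecfequiv} ``using deduction'': one direction assumes the totality $\varphi_1 \land \varphi_2 = \varphi_1 \land \varphi_3$ and uses \ref{rule:dt} to move $\varphi_1 \land \varphi_2 \leftrightarrow \varphi_1 \land \varphi_3$ into the theory, after which \Cref{ecfequiv} and the predicate assumption on $\varphi_1$ finish the job; so the equivalence-level lemma does the real work and the equality-level statement is a corollary. You instead bypass \Cref{ecfequiv} entirely and case-split on the syntactic predicate characterisation $\varphi_1 = \top \lor \varphi_1 = \bot$, collapsing each branch by rewriting with \ref{rule:rewritelocal}. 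This is more elementary and self-contained (no deduction rule, hence none of its side conditions), at the cost of needing a handful of auxiliary equalities and of reading ``predicate pattern'' as the derivable disjunction \ref{ax:predicate} rather than as the semantic condition the paper officially uses --- a reasonable reading, since the lemma is purely syntactic, but worth stating explicitly. The one step you should tighten is the lifting of $\Gamma \vdash \chi$ to $\Gamma \vdash \total{\chi}$ (needed to turn $\top \land \psi \leftrightarrow \psi$ etc.\ into equalities): you justify it semantically via validity in all models, which is out of place in a paper whose stated contribution is a fully syntactic soundness argument, and which would in any case need a completeness theorem to close the loop. The fact itself is fine, but it should be derived syntactically --- from $\vdash \chi$ one gets $\vdash \lnot\chi \leftrightarrow \bot$, then $\vdash \defined{\lnot\chi} \leftrightarrow \defined{\bot}$ by framing/congruence in the application context, and $\vdash \defined{\bot} \leftrightarrow \bot$, whence $\vdash \total{\chi}$.
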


\subsection{Unification}
\label{sec:unificationbckgr}

Unification~\cite{dorel6thref} is the process of solving equations between symbolic expressions. The solution of the unification problem is a substitution that maps variables to expressions, which, when applied to the symbolic expressions, makes them syntactically equal (identical). In this subsection, we recall some essential concepts (mainly from~\cite{dorel6thref}) which we will use when discussing unification in matching logic.

The following definitions assume that we work using terms ($t$) in a term algebra constructed with function symbols ($f$) and variables ($x$).

\begin{definition}[Substitution in unification]
A substitution is a list of bindings $\sigma = \{x_1 \mapsto t_1,\ \dots,\ x_n \mapsto t_n\}$, mapping variables to terms. This is similar to the one in \Cref{def:normalsubs}, however, unlike that one, this version is not limited to a single variable and pattern.
\end{definition}

The application of a substitution of a term ($t \sigma$) is defined as usual, in a recursive descent manner on applications.

\begin{definition}[Composition of substitutions]
	Let $\sigma = \{x_1 \mapsto t_1,\ \dots,\ x_n \mapsto t_n\}$ and $\eta$
	be two substitutions. The composition of these, written as $\sigma\eta$
	can be defined as $\sigma\eta \defines \{x_1 \mapsto t_1\eta,\ \dots,\
	x_n \mapsto t_n\eta\}$, i.e. we apply the substitution $\eta$ to every
	pattern on the right side of $\sigma$.
\end{definition}

Next we define how to determine if two substitutions are equal.

\begin{definition}[Equality of substitutions]
	Two substitutions $\sigma$ and $\eta$ are equal if they are
	extensionally equal: $t\sigma = t\eta$, for all $t$, that is if their
	effect is the same when applied to any term.
\end{definition}

We can combine these two operations to define when one substitution is more
general than another.

\begin{definition}[More general substitution]
	We say that $\sigma$ is more general than $\eta$, written as $\sigma
	\le \eta$ if there is a substitution $\theta$ such that $\sigma\theta =
	\eta$.
\end{definition}

The unification algorithm will produce a substitution that, when applied to the input terms, produces the same term. We call this substitution a unifier.

\begin{definition}[Unifier, unifiable terms]
	A substitution $\sigma$ is called a unifier of two patterns $t_1$ and
	$t_2$ if $t_1\sigma = t_2\sigma$. If there exists a unifier of two terms, we call the terms unifiable.
\end{definition}

However, there may be infinitely many unifiers that only differ in some
insignificant details. For example, if the terms are $f \, x$ and $f \, y$,
we could replace $x$ and $y$ with any term, but all that matters is that
they need to be the same. To solve this, we introduce the concept of the
most general unifier, using the more general relation from above.

\begin{definition}[Most general unifier]
	A unifier $\sigma$ is called the most general if for any unifier
	$\theta$, $\sigma$ is more general than $\theta$ ($\sigma \le \theta$).
\end{definition}

It follows from the definition of the more general unifier that any other
solution may be obtained from this by instantiation. In this paper we will
not consider higher order unification (i.e. variables may not stand in for
functions).

\begin{definition}[Predicate of a substitution]
	It is possible to extract a predicate from a substitution $\sigma$, denoted $\phi^\sigma$. If $\sigma = \{x_1 \mapsto t_1,\ \dots,\ x_n \mapsto t_n\}$, then $\phi^\sigma \defines x_1 = t_1 \land \dots \land x_n = t_n$.
\end{definition}

In the following sections, we will solve semantic unification by solving syntactic unification, producing a most general unifier and turning it into a semantically unifying pattern in matching logic.

\section{Related Work}
\label{sec:related}

\subsection{Rule-based Unification}

There are several algorithms for solving unification problems. We will focus on the rule-based approach that mimics the recursive descent algorithm computing the most general unifier and review the fundamental work presented in~\cite{dorel6thref}. In particular, \cite{dorel6thref} introduces the so-called \emph{unification problem}, which is either a set of pairs of term patterns, or a special symbol $\bot$, representing an unsolvable problem\footnote{For brevity, from this point on we use $\bot$ do denote the failed unification problem instead the falsum pattern.}.

The algorithm is represented by a \emph{unification step} relation, denoted as
$P \Rightarrow P'$ for some $P$ and $P'$ unification problems. This
relation is inductively defined by the rules of the algorithm. See
\Cref{fig:polyrules} for the rules. The reflexive-transitive closure of
this relation is written as $P \Rightarrow^* P'$. It is assumed that the
left side of $\Rightarrow$, $\Rightarrow^*$, and $\cup$ is not $\bot$.

\begin{table}[htb]
	\centering
	\begin{tabular}{ >{\bfseries}l >{\(}l<{\)} }
		Delete & P \cup \{(t,\ t)\} \Rightarrow P \\
		Decomposition & P \cup \{(f(t_1,\ \dots,\ t_n),\ f(u_1,\ \dots,\ u_n))\} \Rightarrow P \cup \{(t_1,\ u_1),\ \dots,\ (t_n,\ u_n)\} \\
		Symbol clash & P \cup \{(f(t_1,\ \dots,\ t_n),\ g(u_1,\ \dots,\ u_m))\} \Rightarrow \bot \\
		Orient & P \cup \{(t,\ x)\} \Rightarrow P \cup \{(x,\ t)\} \quad \text{if } t \notin \EVar \\
		Occurs check & P \cup \{(x,\ t)\} \Rightarrow \bot \quad \text{if } x\in FV(t) \\
		Elimination & P \cup \{(x,\ t)\} \Rightarrow P\{x \mapsto t\} \cup \{(x,\ t)\} \quad \text{if } x \notin FV(t) \\
	\end{tabular}
	\caption{Rules of the unification algorithm}
	\label{fig:polyrules}
\end{table}

\begin{definition}[Solved form]
A unification problem is considered to be in solved form if it is $\bot$	or a set $\{(x_1,\ t_1),\ \dots,\ (x_n,\ t_n)\}$, where $x_i \notin t_j$, for any $1 \le i,j \le n$ (i.e. the first term in the pairs is a variable that does not appear in the second term of any of the pairs).
\end{definition}

A unification problem that is in solved form and is not $\bot$ can be seen
as a substitution. If we start with a unification problem with just the two
terms we want to unify as a pair, then apply the rules of the algorithm, we
will either end up with $\bot$, in which case the terms are not unifiable,
or a unification problem whose corresponding substitution is the most
general unifier of the terms. This expresses the \emph{soundness} of the algorithm. Note that the opposite of the above statement is also true:

\begin{theorem} \label{thm:convenient}
	If $\sigma$ is the most general unifier of $t_1$ and $t_2$, then there exists a unification problem $P$ such that
	$\{(t_1, t_2)\} \Rightarrow^* P$ and $P$ is in solved form,
	with its corresponding substitution being $\sigma$.
\end{theorem}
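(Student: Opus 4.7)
The plan is to derive the statement from two classical properties of the rule system in \Cref{fig:polyrules}: \emph{strong normalisation} of $\Rightarrow$ and \emph{invariance} of the set of unifiers under each rule. Together these guarantee that any maximal reduction from $\{(t_1,t_2)\}$ terminates in either $\bot$ or a solved form, and that the substitution associated with a successful normal form is itself a most general unifier, which can then be identified with $\sigma$.

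First, I would show strong normalisation of $\Rightarrow$ on non-$\bot$ problems via a lexicographic measure on the triple consisting of (i) the number of variables in $P$ not yet isolated as the left-hand side of a solved pair, (ii) the total symbol count of $P$, and (iii) the number of pairs whose left-hand side is not a variable. Inspection of the six rules shows that each strictly decreases this triple: Delete and Decomposition shrink the symbol count, Orient decreases the third component, and Elimination decreases the first. Hence every reduction sequence from a finite problem is finite.

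Second, I would establish the invariance lemma: for every step $P \Rightarrow P'$ with $P' \neq \bot$, the unifiers of $P$ and $P'$ coincide, and for every step $P \Rightarrow \bot$, the problem $P$ has no unifier. This is a case analysis on the rules. Delete, Decomposition and Orient are immediate from the definition of unifier; Symbol clash relies on the assumption that constructor symbols are injective and distinct; Occurs check follows from the observation that a variable cannot be syntactically equal to a proper superterm of itself under any substitution. The case I expect to require the most care is Elimination: here I would use that every unifier $\theta$ of $\{(x,t)\}$ already satisfies $x\theta = t\theta$, so that applying $\{x \mapsto t\}$ to the remaining pairs is unifier-preserving, since $u\{x \mapsto t\}\theta = u\theta$ for every term $u$ once $x\theta = t\theta$.

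Given termination and invariance, the conclusion is straightforward. Take any maximal reduction $\{(t_1,t_2)\} \Rightarrow^* P$ ending in a normal form. Because $\sigma$ unifies $(t_1,t_2)$, invariance propagates it through the sequence, so $P \neq \bot$; and any non-$\bot$ normal form must be in solved form, since otherwise at least one of the six rules would still apply. Let $\tau$ be the substitution associated with $P$. By invariance $\tau$ unifies $(t_1,t_2)$, and because $P$ is solved, any unifier of $(t_1,t_2)$ must be an instance of $\tau$, so $\tau$ is itself a most general unifier. The final step---which I expect to be the main obstacle---is to match $\tau$ with the given $\sigma$; since most general unifiers coincide only up to renaming of variables, I would appeal to the non-determinism of Orient and the freedom to choose the order of Elimination steps to select a reduction sequence in which $\tau$ equals $\sigma$ in the extensional sense of \Cref{sec:unificationbckgr}, rather than merely being equivalent to it up to renaming.
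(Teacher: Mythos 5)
The paper never proves \Cref{thm:convenient}: it is recalled as a known property of the classical rule set and consumed as a black box in \Cref{lemma4coro}, so there is no proof of record to compare yours against. Your outline --- termination plus invariance of the set of unifiers under each rule, hence a maximal reduction from $\{(t_1,t_2)\}$ ends in a non-$\bot$ normal form, which must be solved, and whose associated substitution is a most general unifier --- is the standard textbook argument and is essentially sound up to that point. (One small repair: \textbf{Elimination} as written in \Cref{fig:polyrules} carries no guard that $x$ actually occurs in the rest of $P$, so it is idempotently applicable and your lexicographic triple does not strictly decrease under it; this is harmless here because the theorem only asserts the existence of one reduction, so weak normalisation suffices, but you should not claim strong normalisation for the rules as literally stated.)

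The genuine gap is the final step, which you yourself flag as the main obstacle. The plan to exploit the non-determinism of \textbf{Orient} and the ordering of \textbf{Elimination} steps to hit the given $\sigma$ on the nose does not work: \textbf{Orient} only fires when the left component is a non-variable ($t \notin \EVar$), so a variable--variable pair can never be flipped. Concretely, take $t_1 = x$ and $t_2 = y$. Under the paper's definitions both $\{x \mapsto y\}$ and $\{y \mapsto x\}$ are most general unifiers, and they are not extensionally equal; yet the only solved form reachable from $\{(x,\ y)\}$ is $\{(x,\ y)\}$ itself, whose corresponding substitution is $\{x \mapsto y\}$. Hence no choice of reduction strategy can realise $\sigma = \{y \mapsto x\}$, and the statement, read with extensional equality of substitutions and with ``most general unifier'' meaning any $\sigma$ satisfying the definition, is not provable. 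What your argument actually delivers is that the algorithm reaches a solved form whose substitution $\tau$ is \emph{some} most general unifier, equivalent to $\sigma$ in the sense that $\tau \le \sigma$ and $\sigma \le \tau$. To close the gap you must either weaken the conclusion to that equivalence (which is all \Cref{lemma4coro} really needs, since $\phi^\tau$ and $\phi^\sigma$ are then interderivable, though that interderivability itself requires an argument) or fix a canonical notion of ``the'' most general unifier tied to the algorithm's output.
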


Let us demonstrate rule-based unification with an example, which we will revisit in later sections for the sake of comparison.

\begin{example} \label{polyex}
	The following is an exhaustive application of the unification rules on the unification problem constructed with terms $f(x,\ 
	g(1),\ g(z))$ and $f(g(y),\ g(y),\ g(g(x)))$.

	\centering
	\begin{tabular}{\ruletablespec}
		\{\highlight{(f(x,\ g(1),\ g(z)),\ f(g(y),\ g(y),\ g(g(x))))}\} & \rulearrow{Decomposition} \\
		\{(x,\ g(y)),\ \highlight{(g(1),\ g(y))},\ (g(z),\ g(g(x)))\}   & \rulearrow{Decomposition} \\
		\{(x,\ g(y)),\ (1,\ y),\ \highlight{(g(z),\ g(g(x)))}\}         & \rulearrow{Decomposition} \\
		\{(x,\ g(y)),\ \highlight{(1,\ y)},\ (z,\ g(x))\}               & \rulearrow{Orient} \\
		\{(x,\ g(y)),\ \highlight{(y,\ 1)},\ (z,\ g(x))\}               & \rulearrow{Elimination} \\
		\{\highlight{(x,\ g(1))},\ (y,\ 1),\ (z,\ g(x))\}               & \rulearrow{Elimination} \\
		\{(x,\ g(1)),\ (y,\ 1),\ (z,\ g(g(1)))\}                        & \rulephantom \\
	\end{tabular}
\end{example}

\subsection{Unification in Matching Logic}

In matching logic, symbolic expressions are encoded as term patterns that are made with applications of constructor function symbols. Unification has already been investigated in matching logic; in particular, Arusoaie et al.~\cite{unificationpaper} adapted the results of~\citep{dorel6thref} to the sorted, polyadic variant of matching logic (where applications are of form $f(x_1,\ \dots,\ x_n)$). In this paper, we do a similar adaptation, but to the unsorted, applicative variant of the logic, where application is a binary operation: $\varphi_1 \, \varphi_2$.

It is to be noted that~\citep{unificationpaper} argues about the soundness of unification on a semantic basis (in terms of the evaluations of patterns), while our work shows a syntactic proof of soundness, based on the sequent calculus for the logic, particularly exploiting the rules for deduction and rewriting.

Although the logic language and therefore the unification problems are somewhat different in our approach, we tried to reuse as much as possible from that of~\citep{unificationpaper}. Specifically, even though we generalise the original unification problem into an abstract data type, we closely follow the approach of turning unification problems and substitutions into predicate patterns in matching logic.

\section{Unification in an Applicative Matching Logic}
\label{sec:unification}

As discussed in~\Cref{sec:related}, the rule-based unification algorithm relies on a data structure called a unification problem. This is a set of pairs in the original presentation~\citep{dorel6thref}, but we generalise it to an abstract data type that is not tied to a specific underlying container. 

\subsection{Abstract Unification Problem}

First of all, we outline the operations that can be abstracted into the constructors of the new type (and were inherited from set theory and substitution theory in the original definition). The constructors of abstract unification problems are as follows ($P$ is a unification problem, $t$, $t_1$, $t_2$ are term patterns, $x$ is a variable):

\begin{itemize}
	\item $\singletonUP{t_1}{t_2}$ creates a new unification problem from a single pair of terms (this method is used to create the initial problem from the input terms that the algorithm is applied to);
	\item $\bot$ represents a unification problem that ``failed'' because the initial terms were not unifiable;
	\item $\insertUPpair{P}{t_1}{t_2}$ is an insert operation that adds a single pair to the data structure;
	\item $\subst P x t$ transforms the data structure by substituting every occurrence of $x$ in every pair by $t$.
\end{itemize}

We chose to define a constructor to create a singleton unification problem, but note that it would have also been equally correct to allow the creation of an empty problem $\emptyset$, in which case singleton could have simply been defined as $\singletonUP{t_1}{t_2} \defines \insertUPpair{\emptyset}{t_1}{t_2}$. Although we rarely need to work with an empty unification problem, it may be created by the algorithm if the input terms are already the same, therefore no substitution is needed to unify them.

There is only a single destructor for the data structure, $\toPredUP P$, which generates a pattern out of all the stored pairs. Intuitively, when the underlying representation uses sets and pairs, the implementation of decomposition would be defined by the following line:

\begin{equation*}
	\toPredUP P \defines \bigwedge_{\mathclap{(t_1,\ t_2) \in P}} t_1 = t_2
\end{equation*}

However, abstract unification problems can be instantiated to various concrete representations. In the following, we provide the specification of the unification problem by determining the behaviour of the constructor and destructor operations, independent of the representation.

\paragraph{Injectivity.}

From now on, we suppose that $\Gamma$ is a theory that contains the theory of definedness and includes axioms constraining all term constructor symbols to be modelled by injective functions; namely, we assume instances of the following axiom scheme for all constructor symbols $f$:

\begin{equation*}
	\forall x_1,\ \dots,\ x_n,\ y_1,\ \dots,\ y_n. f(x_1,\ \dots,\ x_n) = f(y_1,\ \dots,\ y_n) \rightarrow x_1 = y_1 \land \dots \land x_n = y_n
\end{equation*}

Now, the following four specification axioms define the behaviour of the operations of abstract unification problems:

\begin{itemize}
\item The first property states that a singleton problem is equivalent to a single equality made from its constituent pair.
\begin{field}\label{toPredicateSingletonUP}
	$\Gamma \vdash \toPredUP{\singletonUP{t_1}{t_2}} \leftrightarrow t_1 = t_2$
\end{field}
\item The second property says that if a pair is inserted into the problem, its corresponding equality should appear in the predicate as well.
\begin{field}\label{toPredicateInsertUP}
	$\Gamma \vdash \toPredUP{\insertUPpair{P}{t_1}{t_2}} \leftrightarrow t_1 = t_2 \land \toPredUP P$
\end{field}
\item The third property states that substitution propagates through the predicate creation, that is substituting the terms in the data structure results in substituting them in the predicate as well.
\begin{field}\label{toPredicateSubstituteAllUP}
	$\Gamma \vdash \toPredUP{\subst P x t} \leftrightarrow \subst{(\toPredUP P)}{x}{t}$
\end{field}
\item Finally, the fourth property expresses that if we extend a non-$\bot$ unification problem, the result will not be $\bot$.
\begin{field}\label{insertNotBottomUP}
	$\Gamma \vdash P \ne \bot \rightarrow \insertUPpair{P}{t_1}{t_2} \ne \bot$
\end{field}
\end{itemize}

\subsection{Curried Rules for Abstract Unification Problems}

While the related work~\citep{unificationpaper} uses the polyadic version of the logic to define the rules of the unification algorithm, our goal is to create a formalism in the applicative version. Since the latter only has binary function application, every rule that uses this connective in the former must be rephrased. This is done by replacing functions with their curried versions, essentially taking the argument list and applying its elements one at a time:

\begin{equation*}
	f(t_1,\ \dots,\ t_n) \longrightarrow (\app{(\app{(\app{f} {t_1})} {\dots})} {t_n})
\end{equation*}

The rules impacted by this change are \textbf{Decomposition} and \textbf{Symbol clash}. The new ruleset is described in \Cref{fig:apprules}. Notice how, because we always insert one pair in most rules, and two pairs in \textbf{Decomposition}, the previously used union operations can be easily replaced with simple insertions.

\begin{table}[htb]
	\centering
	\begin{tabular}{ >{\bfseries}l >{\(}l<{\)} }
		Delete & \insertUPpair P t t \Rightarrow P \\
		Decomposition & \insertUPpair{P}{\app{t_1} {t_2}}{\app{u_1} {u_2}} \Rightarrow \insertUPpair{\insertUPpair{P}{t_1}{u_1}}{t_2}{u_2} \\
		Symbol clash L & \insertUPpair P f t \Rightarrow \bot \quad \text{if } t \ne f \land t \notin \EVar \\
		Symbol clash R & \insertUPpair P t f \Rightarrow \bot \quad \text{if } t \ne f \land t \notin \EVar \\
		Orient & \insertUPpair P t x \Rightarrow \insertUPpair P x t \quad \text{if } t \notin \EVar \\
		Occurs check & \insertUPpair P x t \Rightarrow \bot \quad \text{if } x \in \FV{t} \\
		Elimination & \insertUPpair P x t \Rightarrow \insertUPpair{\subst P x t}{x}{t} \quad \text{if } x \notin \FV{t} \\
	\end{tabular}
	\caption{The new rules for the unification algorithm.}
	\label{fig:apprules}
\end{table}

In the case of \textbf{Decomposition}, it is now necessary to tackle functions one parameter at a time. We also need to add both sides to the unification problem, as it is no longer guaranteed that the left side is a function symbol, and it may contain further applications. Note that it is no longer possible to tell if the function symbols and their arity match until we fully decompose the application, and the symbols themselves will also be added to the set.

This is not a problem, however, because if the symbols and the arity did match, then the \textbf{Delete} rule can be used to get rid of the extra pair. If they did not match, we will end up with a symbol and either a different symbol or an application in the set. These are the two cases that the new \textbf{Symbol clash} rules solve. We have decided to split this rule because it is easier to state. They no longer deal with function applications, instead they are specialized to be used at the end of a chain of \textbf{Decomposition}s, filtering out cases that the original \textbf{Symbol clash} would have done in a single step. The $t \ne f$ condition plays a dual purpose here. In case the arities of the function symbols matched, but the symbols themselves did not, $t$ will necessarily be some symbol $g$, that is different from $f$, satisfying the condition. If the arities did not match however, then $t$ will be an application, which obviously will not be equal to a symbol, therefore satisfying the condition again. However, $t$ may still be some variable $x$, and we must take care not to reject this symbol-variable pair using the new rules as those are not erroneous.

We demonstrate in \Cref{fig:comp} with some small examples how the new rules compare to the old ones presented and explained in \Cref{sec:related}.

\newcommand{\UPone}[2]{\singletonUP{#1}{#2}}
\newcommand{\UPtwo}[4]{\insertUPpair{\UPone{#1}{#2}}{#3}{#4}}
\newcommand{\UPthree}[6]{\insertUPpair{\UPtwo{#1}{#2}{#3}{#4}}{#5}{#6}}
\newcommand{\UPfour}[8]{\insertUPpair{\UPthree{#1}{#2}{#3}{#4}{#5}{#6}}{#7}{#8}}

\begin{figure}[htb]
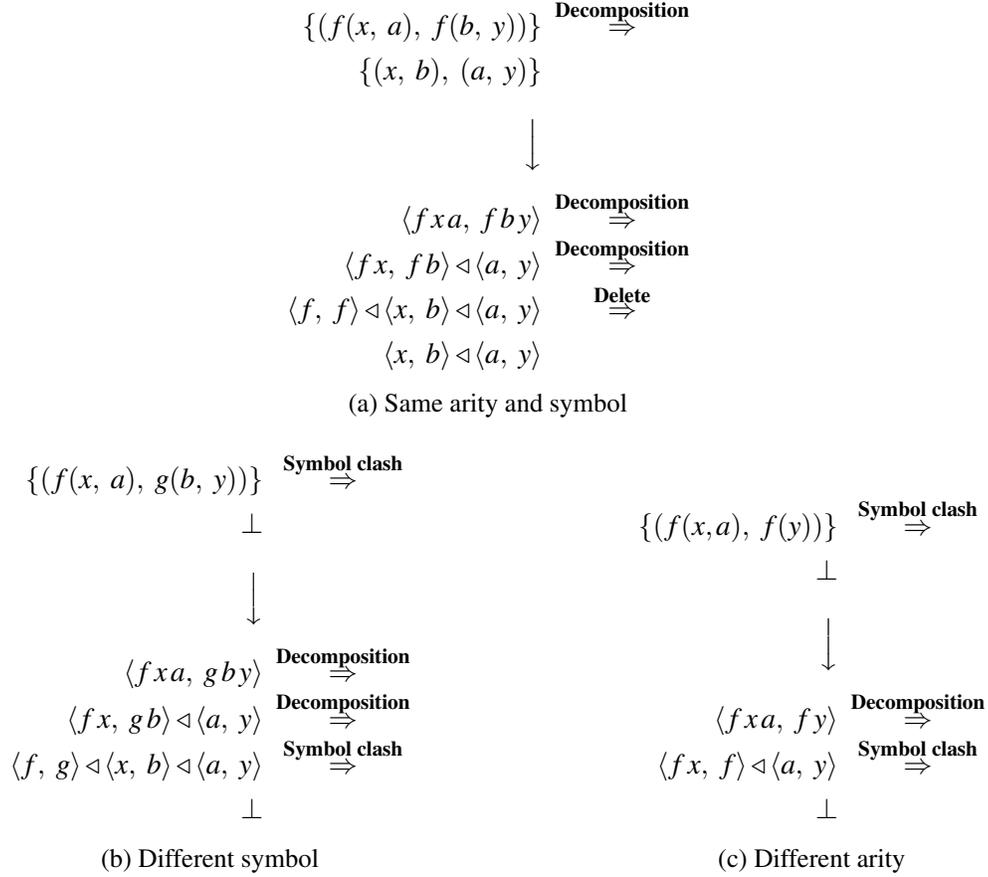

	\centering
	\begin{subfigure}{.90\textwidth}
		\centering
		\begin{tabular}{\ruletablespec}
			\{(f(x,\ a),\ f(b,\ y))\} & \rulearrow{Decomposition} \\
			\{(x,\ b),\ (a,\ y)\} & \rulephantom \\[1em]
			\Big\downarrow & \\[1em]
			\UPone{f \, x \, a}{f \, b \, y} & \rulearrow{Decomposition} \\
			\UPtwo{f \, x}{f \, b}{a}{y} & \rulearrow{Decomposition} \\
			\UPthree{f}{f}{x}{b}{a}{y} & \rulearrow{Delete} \\
			\UPtwo{x}{b}{a}{y} & \rulephantom \\
		\end{tabular}

		\caption{Same arity and symbol}
	\end{subfigure}

	\vspace*{1em}
	\hfill
	\begin{subfigure}{.40\textwidth}
		\centering
		\begin{tabular}{\ruletablespec}
			\{(f(x,\ a),\ g(b,\ y))\} & \rulearrow{Symbol clash} \\
			\bot & \rulephantom \\[1em]
			\Big\downarrow & \\[1em]
			\UPone{f \, x \, a}{g \, b \, y} & \rulearrow{Decomposition} \\
			\UPtwo{f \, x}{g \, b}{a}{y} & \rulearrow{Decomposition} \\
			\UPthree{f}{g}{x}{b}{a}{y} & \rulearrow{Symbol clash} \\
			\bot & \rulephantom \\
		\end{tabular}

		\caption{Different symbol}
	\end{subfigure}
	\hfill
	\begin{subfigure}{.45\textwidth}
		\centering
		\begin{tabular}{\ruletablespec}
			\{(f(x, a),\ f(y))\} & \rulearrow{Symbol clash} \\
			\bot & \rulephantom \\[1em]
			\Big\downarrow & \\[1em]
			\UPone{f \, x \, a}{f \, y} & \rulearrow{Decomposition} \\
			\UPtwo{f \, x}{f}{a}{y} & \rulearrow{Symbol clash} \\
			\bot & \rulephantom \\
		\end{tabular}

		\caption{Different arity}
	\end{subfigure}
	\hfill

	\caption{Comparison of old and new rules. Notice how $(a,\ y)$ and $(x,\ b)$ are not filtered out by either version of \textbf{Symbol clash} as they are not errors.}
	\label{fig:comp}
\end{figure}

\subsection{Soundness}

In this section we present supporting lemmas similar to those discussed in~\cite{unificationpaper}; however, we emphasise that they prove these results semantically, while we construct syntactical proofs. At the end of the section, we prove the soundness of the rule-based unification.

\begin{lemma}\label{prop3}
	Let $\varphi_1$ and $\varphi_2$ be functional patterns. Then $\Gamma
	\vdash \varphi_1 \land \varphi_2 \leftrightarrow \varphi_1 \land
	(\varphi_1 = \varphi_2)$.
\end{lemma}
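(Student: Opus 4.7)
The plan is to prove the biconditional by establishing the two implications separately in the sequent calculus. After applying \ref{rule:introimpl} on each side and \ref{rule:destructand} on the left-hand-side conjunctions to unpack them into the local context, the problem splits into two sub-tasks: proving $\varphi_1 \land (\varphi_1 = \varphi_2)$ from local hypotheses $\varphi_1, \varphi_2$, and proving $\varphi_1 \land \varphi_2$ from local hypotheses $\varphi_1, \varphi_1 = \varphi_2$. Both are then reduced via \ref{rule:splitand}, leaving one trivial branch (closed by \ref{rule:exact}) and one interesting branch in each direction.

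For the forward direction, the non-trivial branch is to derive $\varphi_1 = \varphi_2$ from $\varphi_1, \varphi_2$. Since the sequent $\sequent{\Gamma}{\varphi_1,\varphi_2}{\varphi_1 = \varphi_2}{c}$ corresponds to $\varphi_1 \land \varphi_2 \to \varphi_1 = \varphi_2$, I will chain two existing results: \Cref{phiimpldefphi} applied to the pattern $\varphi_1 \land \varphi_2$ yields $\varphi_1 \land \varphi_2 \to \defined{\varphi_1 \land \varphi_2}$, which by definition is $\varphi_1 \land \varphi_2 \to \varphi_1 \in \varphi_2$; then \Cref{membimplequal} (which requires $\varphi_1, \varphi_2$ to be functional) gives $\varphi_1 \in \varphi_2 \to \varphi_1 = \varphi_2$. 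Lifting the composed implication into the sequent calculus via \ref{rule:inherit} closes this branch.

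For the backward direction, the non-trivial branch is to derive $\varphi_2$ from $\varphi_1$ and $\varphi_1 = \varphi_2$ in the local context. The natural tool is \ref{rule:rewritelocal}, but its orientation rewrites the left-hand side of an equality down to the right-hand side in the \emph{succedent}, so I first need to flip the equality: from $\varphi_1 = \varphi_2$ I derive $\varphi_2 = \varphi_1$ using symmetry of equality (a standard derived result from \ref{rule:reflexivity} and \ref{rule:rewritelocal}, admissible in this calculus). With $\varphi_2 = \varphi_1$ in the context and target $\varphi_2$, instantiating \ref{rule:rewritelocal} with $C = \square$ reduces the goal to proving $\varphi_1$, which is closed by \ref{rule:exact}. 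The main subtlety I anticipate is precisely this orientation juggling, together with the bookkeeping needed to keep $\TopConstraint$ as the side condition on the relevant sub-derivations; everything else is routine propositional and conjunctive reasoning.
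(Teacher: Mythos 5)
Your proposal is correct and follows essentially the same route as the paper's own proof: the forward direction via \Cref{phiimpldefphi} to get $\defined{\varphi_1 \land \varphi_2} = \varphi_1 \in \varphi_2$ and then \Cref{membimplequal} (using functionality), and the backward direction by rewriting with the equality hypothesis. The only difference is bookkeeping — you split the conjunction before rewriting and handle the orientation of the equality explicitly via symmetry, whereas the paper rewrites the whole goal at once — which does not change the substance of the argument.
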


\begin{proof}
First we prove the $\leftarrow$ direction. Here we simply split the premise and rewrite the goal using \ref{rule:rewritelocal} with $\varphi_1 = \varphi_2$ to obtain $\varphi_1 \land \varphi_1$, which is easily solved by the $\varphi_1$	hypothesis.

For the $\rightarrow$ direction, we know from \Cref{phiimpldefphi}that $\varphi \rightarrow \defined \varphi$. We apply this to the	$\varphi_1 \land \varphi_2$ hypothesis, that gives us a defined conjunction, which is the definition of $\in$, so our new hypothesis is $\varphi_1 \in \varphi_2$.

Then, we discharge $\varphi_1$ by using the left hand side of the original hypothesis, and $\varphi_1 \in \varphi_2$ implies $\varphi_1 = \varphi_2$ as shown in \Cref{membimplequal}, so we can solve that part as well.
\end{proof}

This lemma of utmost practical relevance, since it is used to set up the unification problem by introducing the equality pattern $\varphi_1 = \varphi_2$ which can be turned into the unification problem $\singletonUP{\varphi_1}{\varphi_2}$ to solve.

\begin{lemma} \label{lemma1}
	Let $\varphi$ and $t$ be patterns, and $x$ be a variable. Then $\Gamma
	\vdash x = t \rightarrow \subst{\varphi}{x}{t} = \varphi$.
\end{lemma}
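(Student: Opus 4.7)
The plan is first to apply \ref{rule:introimpl} to discharge the implication and move the equality $x = t$ into the local context, reducing the task to proving $\subst{\varphi}{x}{t} = \varphi$ under the hypothesis $x = t$. From here the natural strategy is structural induction on $\varphi$, using the induction hypothesis that the claim already holds for strict subpatterns.

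In the base cases, if $\varphi = x$ the substitution yields $t$, so the goal becomes $t = x$; this follows by one application of \ref{rule:rewritelocal} with the hypothesis $x = t$, rewriting the single occurrence of $x$ on the right-hand side into $t$, and then closing with \ref{rule:reflexivity}. If $\varphi$ is any other variable, a symbol, or $\bot$, then substitution is the identity and the goal immediately reduces to $\varphi = \varphi$, discharged by \ref{rule:reflexivity}. The inductive cases for application $\app{\varphi_1}{\varphi_2}$ and implication $\varphi_1 \to \varphi_2$ are analogous: the substitution distributes through the top-level constructor, and the two induction hypotheses $\subst{\varphi_i}{x}{t} = \varphi_i$ can be used (either by two successive invocations of \ref{rule:rewritelocal} inside a suitable context, or by an appeal to the Congruence theorem) to reduce the goal to an instance of \ref{rule:reflexivity}.

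The main obstacle is the existential case $\varphi = \expat{y}{\varphi_1}$. There we must first ensure, by $\alpha$-renaming if necessary, that $y \neq x$ and $y \notin \FV{t}$, so that the substitution genuinely commutes with the binder and the goal can be rewritten as $\expat{y}{\subst{\varphi_1}{x}{t}} = \expat{y}{\varphi_1}$. The induction hypothesis then supplies $\subst{\varphi_1}{x}{t} = \varphi_1$, and Congruence (used with a pattern context whose hole lies under the $\exists y$ binder) lifts this equality to the enclosing pattern, after which \ref{rule:reflexivity} closes the case. The technical bookkeeping of the freshness side conditions---both for $\alpha$-renaming and for the pattern contexts fed to \ref{rule:rewritelocal}---is where the bulk of the formal effort actually lies, and it is precisely the kind of detail that the mechanised Coq development is designed to handle.
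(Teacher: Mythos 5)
Your plan takes a genuinely different route from the paper's, and the divergence matters. The paper does \emph{not} induct on $\varphi$ and, notably, says it avoids \ref{rule:rewritelocal} ``for technical reasons'': it applies the congruence of equality exactly once, to the hypothesis $x = t$ itself, with the single pattern context $C = (\subst{\varphi}{x}{\square} = \varphi)$ for a fresh hole $\square$. Since $C[x]$ is $\varphi = \varphi$ and $C[t]$ is $\subst{\varphi}{x}{t} = \varphi$, this yields $(\varphi = \varphi) \leftrightarrow (\subst{\varphi}{x}{t} = \varphi)$, and reflexivity plus modus ponens closes the proof. All case analysis on the shape of $\varphi$ is thereby absorbed into one appeal to an existing meta-theorem.

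Your induction, by contrast, has a genuine gap precisely in the existential case, which you defer to ``bookkeeping.'' First, the induction hypothesis is conditional: it gives $\Gamma \vdash x = t \rightarrow \subst{\varphi_1}{x}{t} = \varphi_1$, not the unconditional $\Gamma \vdash \subst{\varphi_1}{x}{t} = \varphi_1$ that the Congruence theorem (a meta-theorem about provable equivalences) requires as input; to use it you must first internalize $x = t$ into the theory via \ref{rule:dt}, which carries its own side conditions, and you must also pass from $\leftrightarrow$ back to $=$ afterwards. Second, using \ref{rule:rewritelocal} with a context whose hole sits under $\exists y$ is exactly where the ``complex technical side conditions'' of equality elimination bite: $y$ is in general free in $\varphi_1$ and in $\subst{\varphi_1}{x}{t}$, so the binder captures free variables of the equated patterns, and your $\alpha$-renaming can make $y$ fresh for $x$ and $t$ but not for $\varphi_1$. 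Third, $\alpha$-renaming replaces $\varphi_1$ by an $\alpha$-variant, so a plain structural induction hypothesis no longer applies and you would need induction on size or on $\alpha$-equivalence classes. None of this is unfixable, but it is the real mathematical content of the lemma, and the one-line context trick in the paper is how that content is actually discharged.
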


\begin{proof}
    For technical reasons, we do not use the rewrite rule here, but the congruence of equality~\cite{matchingmulogic} on $x = t$,
	using the pattern context $\subst{\varphi}{x}{\square} = \varphi$. We can assume
	that $\square$ is a fresh variable in $\varphi$. This gives us
	$\subst{\subst{\varphi}{x}{\square}}{\square}{x} = \varphi \leftrightarrow
	\subst{\subst{\varphi}{x}{\square}}{\square}{t} = \varphi$.

	In $\subst{\subst{\varphi}{x}{\square}}{\square}{x}$ we replace all occurrences of $x$
	with a fresh variable $\square$, and then immediately replace all
	occurrences of that with $x$. It is clear to see that this operation is
	the same as $\subst{\varphi}{x}{x}$, which actually does nothing and so the
	pattern is equivalent to just $\varphi$.

	We can make a similar argument in the case of
	$\varphi[\square/x][t/\square]$, however, this time it is $t$ that we
	place back into $\square$, so this pattern is equivalent to
	$\varphi[t/x]$.

	Rewriting the formula with these two equivalences, we get $\varphi =
	\varphi \leftrightarrow \varphi[t/x] = \varphi$.

	As equivalence is just the conjunction of two implications, we can drop
	the $\leftarrow$ direction, keeping only the $\rightarrow$ one.

	Finally, $\varphi = \varphi$ is proven by reflexivity, so by modus
	ponens on the implication, we arrive at our goal of $\varphi[t/x] =
	\varphi$.
\end{proof}

The above lemma makes it easier to simplify substitutions using congruence
in future proofs.

\begin{lemma} \label{lemma2}
	Let $\varphi$ be a pattern and $\sigma$ a substitution. Then $\Gamma
	\vdash \varphi\sigma \land \phi^\sigma \leftrightarrow \varphi \land
	\phi^\sigma$.
\end{lemma}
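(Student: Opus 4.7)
The plan is to proceed by induction on $n$, the number of bindings in the substitution $\sigma = \{x_1 \mapsto t_1, \ldots, x_n \mapsto t_n\}$. In the base case $n = 0$, we have $\varphi\sigma = \varphi$ and $\phi^\sigma$ is the empty conjunction, provably equivalent to $\top$; the goal then collapses to $\varphi \land \top \leftrightarrow \varphi \land \top$, which is immediate from reflexivity and propositional reasoning.

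For the inductive step, I write $\sigma_0 = \{x_1 \mapsto t_1, \ldots, x_{n-1} \mapsto t_{n-1}\}$, so that $\varphi\sigma = \subst{(\varphi\sigma_0)}{x_n}{t_n}$ and $\phi^\sigma$ is propositionally equivalent to $\phi^{\sigma_0} \land (x_n = t_n)$. Instantiating \Cref{lemma1} with the pattern $\varphi\sigma_0$, the variable $x_n$, and the term $t_n$ yields $x_n = t_n \rightarrow \varphi\sigma = \varphi\sigma_0$. In the sequent calculus I would split the biconditional with \ref{rule:splitand} and \ref{rule:introimpl}, destruct the conjunctive hypothesis with \ref{rule:destructand} to expose both $\phi^{\sigma_0}$ and $x_n = t_n$, derive the equality $\varphi\sigma = \varphi\sigma_0$ by modus ponens, and then apply \ref{rule:rewritelocal} to replace $\varphi\sigma$ by $\varphi\sigma_0$ (or the reverse in the other direction). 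The remaining subgoal is $\varphi\sigma_0 \land \phi^{\sigma_0} \leftrightarrow \varphi \land \phi^{\sigma_0}$, which is precisely the induction hypothesis, yielding the overall chain $\varphi\sigma \land \phi^\sigma \leftrightarrow \varphi\sigma_0 \land \phi^\sigma \leftrightarrow \varphi \land \phi^\sigma$.

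The main obstacle I anticipate is substitution bookkeeping: \Cref{lemma1} is phrased for a single-variable substitution $\subst{\psi}{x}{t}$, while $\sigma$ is a list of bindings, so the induction must be structured so that the outermost substitution $\subst{\cdot}{x_n}{t_n}$ can be peeled off cleanly, with $\varphi\sigma_0$ playing the role of $\psi$ in the lemma. A related minor technicality is that we need $\phi^\sigma$ to decompose into $\phi^{\sigma_0} \land (x_n = t_n)$, which follows from its definition together with associativity and commutativity of conjunction. Beyond these routine points, the argument is a straightforward combination of the induction hypothesis with a single application of equality rewriting.
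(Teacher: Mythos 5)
Your proof is correct and rests on the same core strategy as the paper's --- induction on the substitution, with \Cref{lemma1} discharging one binding per step --- but the decomposition is genuinely different in a way worth noting. The paper first applies \Cref{ecfequiv} to move $\phi^\sigma$ into the antecedent, reducing the goal to $\phi^\sigma \rightarrow (\varphi\sigma \leftrightarrow \varphi)$, and then peels the \emph{first} binding, writing $\varphi\sigma = (\subst{\varphi}{x}{t})\sigma'$; since the induction hypothesis must then be instantiated at $\subst{\varphi}{x}{t}$ rather than at $\varphi$, the induction has to be set up with $\varphi$ universally generalized. You instead keep the biconditional form and peel the \emph{last} binding, $\varphi\sigma = \subst{(\varphi\sigma_0)}{x_n}{t_n}$, so your induction hypothesis applies to the original $\varphi$ and no generalization is needed --- a cleaner induction. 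The price is exactly the bookkeeping point you flag: if, as the paper's own unfolding suggests, $\varphi\sigma$ is defined by folding the bindings from the front, then the cons decomposition is definitional while your snoc decomposition requires a separate (routine) lemma about the fold; in a mechanization this is a real, if small, extra obligation. The endgames also differ slightly --- the paper reduces to $\subst{\varphi}{x}{t} \leftrightarrow \varphi$ and extracts the equivalence from the equality of \Cref{lemma1} via \ref{rule:dt}, whereas you keep the equality $\varphi\sigma = \varphi\sigma_0$ in the local context and rewrite with \ref{rule:rewritelocal} --- but both are legitimate, and your propositional step lifting the induction hypothesis from $\phi^{\sigma_0}$ to $\phi^{\sigma_0} \land (x_n = t_n)$ is a standard congruence of $\land$ under $\leftrightarrow$.
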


\begin{proof}
	First we extract the $\phi^\sigma$ from both sides of the equivalence
	using \Cref{ecfequiv}, which yields us $\phi^\sigma \rightarrow
	\varphi\sigma \leftrightarrow \varphi$.

	Then we do induction on $\sigma$ with any $\varphi$. If it is the empty
	substitution, we can prove $\top \rightarrow \varphi \leftrightarrow
	\varphi$ by reflexivity. If it is not, then there is at least one $x$
	element variable and $t$ pattern that is part of $\sigma$ ($\sigma = \{x
	\mapsto t\} \cup \sigma'$), and we also know that $x = t$ is part of
	$\phi^\sigma$ ($\phi^\sigma = (x = t) \land \phi^{\sigma'}$). Thus our
	goal becomes $x = t \land \phi^{\sigma'} \rightarrow
	\varphi[t/x]\sigma' \leftrightarrow \varphi$.

	If we specialize the induction hypothesis with $\varphi[t/x]$ as
	$\varphi$, we get $\phi^{\sigma'} \rightarrow \varphi[t/x]{\sigma'}
	\leftrightarrow \varphi[t/x]$.

	$\phi^{\sigma'}$ is part of our hypothesis. Using the transitivity and
	symmetry of equivalence with this and the goal, we can reduce the goal to $\varphi[t/x]
	\leftrightarrow \varphi$.

	We still have $x = t$ in the hypothesis, and we can use \Cref{lemma1} on it to obtain $\varphi[t/x] = \varphi$.

	Finally, using \ref{rule:dt} we can extract the equivalence	from this equality, proving the goal.
\end{proof}

This is, in a sense, an extension of \Cref{lemma1} to set-based
substitutions, with the added requirement for the predicate $\phi^\sigma$.

Now we move on to justifying the soundness of single unification steps, and then argue about the correctness of unification step sequences by using induction.

\begin{lemma} \label{lemma3}
	Let $P$ and $P'$ be unification problems. If $P \Rightarrow P'$ and $P'
	\ne \bot$ then $\Gamma \vdash \toPredUP P \rightarrow \toPredUP{P'}$.
\end{lemma}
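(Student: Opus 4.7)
The plan is to do a case analysis on the rule used to justify $P \Rightarrow P'$. Since we assume $P' \ne \bot$, the rules \textbf{Symbol clash L}, \textbf{Symbol clash R}, and \textbf{Occurs check} do not apply, leaving four cases: \textbf{Delete}, \textbf{Decomposition}, \textbf{Orient}, and \textbf{Elimination}. In every case, the first step is to unfold $\toPredUP{P}$ and $\toPredUP{P'}$ via \ref{toPredicateInsertUP} (and \ref{toPredicateSubstituteAllUP} where relevant), reducing the goal to a propositional/equational statement about the equality predicates that are produced by the constructors of the data structure.

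For \textbf{Delete}, $\toPredUP{\insertUPpair{P}{t}{t}} \leftrightarrow t = t \land \toPredUP{P}$, and the right conjunct alone is the goal, so the implication follows by dropping the trivially true equality $t = t$ (provable by \ref{rule:reflexivity}). For \textbf{Orient}, after unfolding both sides we need to derive $x = t \land \toPredUP{P}$ from $t = x \land \toPredUP{P}$, which is immediate from the symmetry of equality (itself an instance of \ref{rule:rewritelocal} together with \ref{rule:reflexivity}).

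The interesting cases are \textbf{Decomposition} and \textbf{Elimination}. For \textbf{Decomposition}, unfolding reduces the goal to showing $t_1\,t_2 = u_1\,u_2 \land \toPredUP{P} \rightarrow t_2 = u_2 \land t_1 = u_1 \land \toPredUP{P}$; this is where the injectivity assumption on constructors enters --- specifically its applicative-form consequence that $t_1\,t_2 = u_1\,u_2 \rightarrow t_1 = u_1 \land t_2 = u_2$ --- so once injectivity is applied and the conjunction is rearranged the goal closes by \ref{rule:exact} and \ref{rule:splitand}. For \textbf{Elimination}, unfolding and applying \ref{toPredicateSubstituteAllUP} reduces the goal to $x = t \land \toPredUP{P} \rightarrow x = t \land \subst{(\toPredUP{P})}{x}{t}$; the hypothesis $x = t$ together with \Cref{lemma1} yields $\subst{(\toPredUP{P})}{x}{t} = \toPredUP{P}$, which via \ref{rule:rewritelocal} lets us rewrite the goal to $\toPredUP{P}$ and close with the remaining hypothesis.

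The main obstacle I expect is the \textbf{Decomposition} case, because it is the only step whose soundness does not follow purely from the abstract properties \ref{toPredicateSingletonUP}--\ref{insertNotBottomUP} and equational reasoning --- it requires the extra-logical assumption that term constructors (equivalently, applications built from them) are injective. One must also be careful that the injectivity scheme quoted in polyadic form in the paper really yields the binary-application injectivity used here; discharging this cleanly may require an auxiliary lemma (or an explicit applicative restatement of the injectivity axiom) before the case analysis can be carried out.
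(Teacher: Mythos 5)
Your proof matches the paper's almost exactly: a case analysis on the rule justifying $P \Rightarrow P'$, unfolding both sides via Properties \ref{toPredicateInsertUP} and \ref{toPredicateSubstituteAllUP}, discharging \textbf{Symbol clash} and \textbf{Occurs check} by contradiction with $P' \ne \bot$, closing \textbf{Delete} and \textbf{Orient} by reflexivity and symmetry of equality, and invoking constructor injectivity for \textbf{Decomposition}. The only (harmless) divergence is in \textbf{Elimination}, where you apply \Cref{lemma1} directly to the single binding and rewrite with \ref{rule:rewritelocal}, whereas the paper instantiates \Cref{lemma2} with $\sigma = \{x \mapsto t\}$ --- these amount to the same argument for a singleton substitution --- and your observation that the polyadic injectivity scheme must first be recast as binary-application injectivity is a legitimate technical point that the paper's own proof glosses over.
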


\begin{proof}
	We do induction on the $\Rightarrow$ relation. In most cases we will
	use \Cref{toPredicateInsertUP}, which states for
	every $P$ unification problem and $x$, $y$ terms that $\toPredUP{\insertUPpair P x y} \leftrightarrow x = y \land \toPredUP P$.

\begin{itemize}

\item
	Using the property in the \textbf{Delete} case, we get $t = t \land
	\toPredUP P \rightarrow \toPredUP P$ which is trivially proven.

\item
	In the case of \textbf{Decomposition}, we use the property three
	times, ultimately transforming the goal to $\app f t = \app g u \land
	\toPredUP P \rightarrow t = u \land f = g \land \toPredUP P$. This is the one
	time where we need to make use of the injectivity axiom, applying it to
	$\app f t = \app g u$ we obtain $f = g \land t = u$. With this now we can
	easily prove the goal.

\item
	With \textbf{Symbol clash L}, \textbf{Symbol clash R}, and later
	\textbf{Occurs check} we actually have a contradiction, as in both
	cases $P' = \bot$ and we have a hypothesis that says $P' \ne \bot$,
	therefore these cases are immediately discharged.

\item
	In the \textbf{Orient} case we use the property twice to transform the
	goal to $x = t \land \toPredUP P \rightarrow t = x \land \toPredUP P$ and this is
	solved using the symmetry of equality.

\item
	Finally, in the case of \textbf{Elimination} we will once again use the
	property twice, however, this time that will not be enough. We can
	reduce the goal to $x = t \land \toPredUP P \rightarrow x = t \land
	\toPredUP{\subst P t x}$ and the $x = t$ part is solvable like before,
	but for the other one we need to take advantage of another property,
	\Cref{toPredicateSubstituteAllUP}. This says that $\toPredUP{\subst P t x} \leftrightarrow \subst{\toPredUP P}{x}{t}$. Rewriting with this, we get $x = t
	\land \toPredUP P \rightarrow x = t \land \subst{\toPredUP P}{x}{t}$. If we choose
	$\sigma$ to be $\{x \mapsto t\}$ and $\varphi$ to be $\toPredUP P$, then this
	is a direct consequence of \Cref{lemma2}.

\end{itemize}
\end{proof}

We can show that not only single steps but their sequence is correct:

\begin{lemma} \label{lemma4}
	If $P$ and $P'$ are unification problems, and $P
	\Rightarrow^* P'$, where $P' \ne \bot$, then $\Gamma \vdash \toPredUP P
	\rightarrow \toPredUP{P'}$.
\end{lemma}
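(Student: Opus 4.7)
The plan is a straightforward induction on the derivation of $P \Rightarrow^* P'$, exploiting \Cref{lemma3} for the single-step case and the transitivity of implication for the chaining. Since $\Rightarrow^*$ is defined as the reflexive-transitive closure of $\Rightarrow$, the induction will split into a base case (reflexivity) and a step case (one $\Rightarrow$ followed by a shorter $\Rightarrow^*$ chain).

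In the base case, where $P = P'$, the goal becomes $\Gamma \vdash \toPredUP P \rightarrow \toPredUP P$, which is discharged immediately by \ref{rule:introimpl} followed by \ref{rule:exact} (or, equivalently, by reflexivity of implication lifted through \ref{rule:inherit}). For the step case, assume $P \Rightarrow P'' \Rightarrow^* P'$ with $P' \ne \bot$. I would first argue that $P'' \ne \bot$: since the relation $\Rightarrow$ is stipulated (in \Cref{sec:related}) never to have $\bot$ on its left, any $\Rightarrow^*$ chain starting at $\bot$ must be the empty chain; hence $P'' = \bot$ would force $P' = \bot$, contradicting the hypothesis. With $P'' \ne \bot$ established, \Cref{lemma3} supplies $\Gamma \vdash \toPredUP P \rightarrow \toPredUP{P''}$, and the inductive hypothesis (applied to the shorter chain $P'' \Rightarrow^* P'$, whose terminal element is still $P'$ and hence non-$\bot$) yields $\Gamma \vdash \toPredUP{P''} \rightarrow \toPredUP{P'}$. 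Chaining these two implications finishes the case.

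Concretely in the sequent calculus, the chaining step can be carried out by introducing the antecedent $\toPredUP P$ with \ref{rule:introimpl}, using \ref{rule:assert} with the cut formula $\toPredUP{P''}$: the left branch discharges $\toPredUP{P''}$ from the hypothesis $\toPredUP P$ via (the Hilbert-inherited form of) the first implication, and the right branch concludes $\toPredUP{P'}$ from $\toPredUP{P''}$ via the second implication.

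The only genuine subtlety I anticipate is the argument that the intermediate $P''$ is not $\bot$; everything else is mechanical. In the Coq formalisation, this likely corresponds to an inversion lemma on $\Rightarrow^*$ chains starting at $\bot$, or to the very definition of $\Rightarrow^*$ being restricted to non-$\bot$ left-hand sides (as stated at the beginning of \Cref{sec:related}). Once that small lemma is in place, the inductive proof goes through without further incident.
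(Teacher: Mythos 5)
Your proof matches the paper's: induction on the derivation of $P \Rightarrow^* P'$, a trivial reflexive case, and a transitive case that chains \Cref{lemma3} with the inductive hypothesis after first establishing that the intermediate problem $P''$ is not $\bot$ by inspecting the tail chain $P'' \Rightarrow^* P'$. The only cosmetic difference is that the paper justifies the claim that a left-hand side of a $\Rightarrow$ step cannot be $\bot$ explicitly via \Cref{insertNotBottomUP} (every such left-hand side is an insertion into a non-$\bot$ problem), whereas you appeal to the blanket stipulation from \Cref{sec:related}; in the abstract setting that stipulation is precisely what \Cref{insertNotBottomUP} supplies, so your argument is the same one with that step left implicit.
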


\begin{proof}
	First we do induction on the $\Rightarrow^*$ relation.

	In the reflexive case, we have $\toPredUP P \rightarrow \toPredUP P$, which is
	trivially provable.

	In the transitive case, we know that there is a $P''$ such that $P \Rightarrow P''$ and $P'' \Rightarrow^* P'$. We also have an inductive hypothesis that if $P' \ne \bot$, then $\toPredUP{P''} \rightarrow \toPredUP{P'}$ and from the hypothesis
	we know that $P' \ne \bot$. We need to use these facts to prove that $\toPredUP P
	\rightarrow \toPredUP{P''}$.

	For this we are going to use \Cref{lemma3} with $P \Rightarrow P''$,
	but in order to do that, we first need to prove that $P'' \ne \bot$.

	For that we are going to do another induction, this time on $P''
	\Rightarrow^* P'$. In the reflexive case, when $P'' = P'$, we can once
	again use the $P' \ne \bot$ hypothesis to do this. In the transitive case we get
	that $P'' \Rightarrow P'''$, and we know that in every case of
	$\Rightarrow$, the left side is an insertion into some unification
	problem that is not $\bot$. That is, there is some $P_0$, such that
	$P_0 \ne \bot$ and $P'' = \insertUPpair{P_0}{\_}{\_}$. But thanks to
	\Cref{insertNotBottomUP}, that said $P_0
	\ne \bot \rightarrow \insertUPpair{P_0}{\_}{\_} \ne \bot$, we know that $P''$
	cannot be $\bot$.

	With this condition satisfied, we can use \Cref{lemma3} to finish
	the proof.
\end{proof}

\begin{corollary} \label{lemma4coro}
	If a substitution $\sigma$ is the most general unifier of two patterns
	$t_1$ and $t_2$, then $\Gamma \vdash t_1 = t_2 \rightarrow
	\phi^\sigma$.
\end{corollary}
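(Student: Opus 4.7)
The plan is to chain together Theorem~\ref{thm:convenient} with Lemma~\ref{lemma4}, then bridge the resulting unification-problem predicate to the substitution predicate $\phi^\sigma$. The first two ingredients do almost all the work; only the last bridging step is genuinely a separate (and easy) observation about solved forms.

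Concretely, I would begin by applying Theorem~\ref{thm:convenient} to $\sigma$ to obtain a unification problem $P$ in solved form with $\singletonUP{t_1}{t_2} \Rightarrow^* P$ and whose associated substitution is $\sigma$. Since $\sigma$ is a genuine substitution (not failure), $P \ne \bot$, so the hypothesis of Lemma~\ref{lemma4} is satisfied and we get $\Gamma \vdash \toPredUP{\singletonUP{t_1}{t_2}} \rightarrow \toPredUP{P}$. Property~\ref{toPredicateSingletonUP} then rewrites the antecedent to $t_1 = t_2$, giving $\Gamma \vdash t_1 = t_2 \rightarrow \toPredUP{P}$.

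It remains to identify $\toPredUP{P}$ with $\phi^\sigma$. Because $P$ is in solved form, it can be viewed as being built from some singleton $\singletonUP{x_1}{s_1}$ by a finite sequence of insertions $\insertUPpair{\cdot}{x_i}{s_i}$ (where $\sigma = \{x_i \mapsto s_i\}$). Repeatedly applying Property~\ref{toPredicateInsertUP}, and closing the recursion with Property~\ref{toPredicateSingletonUP}, we derive $\Gamma \vdash \toPredUP{P} \leftrightarrow x_1 = s_1 \land \dots \land x_n = s_n$, which is literally the definition of $\phi^\sigma$. Chaining this with the implication above yields the desired $\Gamma \vdash t_1 = t_2 \rightarrow \phi^\sigma$.

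The main obstacle here is not deep, but it is a bookkeeping step: the abstract specification of unification problems does not directly expose the conjunctive form of $\toPredUP{P}$ for solved forms, so one has to show by structural induction on the construction of $P$ (as a sequence of insertions into a singleton) that $\toPredUP{P}$ provably unfolds to $\phi^\sigma$. Everything else is a clean composition of Theorem~\ref{thm:convenient}, Lemma~\ref{lemma4}, and the specification axioms already established; in particular no additional use of deduction, rewriting, or injectivity is needed beyond what is already embedded in Lemma~\ref{lemma4}.
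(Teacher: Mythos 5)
Your proposal is correct and follows essentially the same route as the paper: invoke Theorem~\ref{thm:convenient} to obtain a non-$\bot$ solved form $P$ reachable from $\singletonUP{t_1}{t_2}$, identify $t_1 = t_2$ with $\toPredUP{\singletonUP{t_1}{t_2}}$ via Property~\ref{toPredicateSingletonUP}, and conclude with Lemma~\ref{lemma4}. The only difference is that you make explicit (via induction on the insertions using Property~\ref{toPredicateInsertUP}) the identification of $\toPredUP{P}$ with $\phi^\sigma$, which the paper simply asserts as $\phi^{P'} = \phi^\sigma$.
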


\begin{proof}
	Here we can use \Cref{thm:convenient} with the most general unifier
	$\sigma$ to obtain a $P'$ unification problem that is not $\bot$ and
	$\phi^{P'} = \phi^\sigma$. We also know that $t_1 = t_2$ is the same as
	$\phi^{\UPone{t_1}{t_2}}$, from \Cref{toPredicateSingletonUP}. By
	rewriting with these equalities, we have transformed the goal to be
	solvable by \Cref{lemma4}.
\end{proof}

\Cref{lemma4} extends \Cref{lemma3} to the
reflexive-transitive closure of the relation, while \Cref{lemma4coro}
specializes it to the singleton pattern that the algorithm starts with.

\begin{lemma} \label{lemma5}
	Let $\sigma$ be a unifier of two terms $t_1$ and $t_2$. Then $\Gamma
	\vdash \phi^\sigma \rightarrow t_1 = t_2$.
\end{lemma}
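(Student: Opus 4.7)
The plan is to obtain the result from a single application of \Cref{lemma2} to the predicate pattern $t_1 = t_2$. Instantiating \Cref{lemma2} with $\varphi$ taken to be $t_1 = t_2$ gives
\[
\Gamma \vdash (t_1 = t_2)\sigma \land \phi^\sigma \leftrightarrow (t_1 = t_2) \land \phi^\sigma.
\]

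Because substitution is defined recursively and passes through the derived equality operator, $(t_1 = t_2)\sigma$ is syntactically $t_1\sigma = t_2\sigma$. The assumption that $\sigma$ is a unifier of $t_1$ and $t_2$ means that $t_1\sigma$ and $t_2\sigma$ are identical terms, so that pattern is in fact $t = t$ for some $t$ and is immediately discharged by \ref{rule:reflexivity}. Hence $\Gamma \vdash (t_1 = t_2)\sigma$.

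The last step is to eliminate the now-provable conjunct from the left-hand side of the equivalence above. Using \ref{rule:splitand} and \ref{rule:destructand} one obtains the standard fact that whenever $\vdash \alpha$, one has $\vdash \alpha \land \beta \leftrightarrow \beta$; applied to our situation this gives $\Gamma \vdash \phi^\sigma \leftrightarrow (t_1 = t_2) \land \phi^\sigma$. Taking the forward direction and projecting out the first conjunct yields exactly the target $\phi^\sigma \rightarrow t_1 = t_2$.

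The main subtlety is justifying that $(t_1 = t_2)\sigma$ really is $t_1\sigma = t_2\sigma$ syntactically, which depends on the substitution operator commuting with the definitional unfolding of equality (and on the usual capture-avoidance assumptions). If that bookkeeping is not convenient, a straightforward alternative is induction on $\sigma$ mirroring the structure of the proof of \Cref{lemma2}: the base case follows from \ref{rule:reflexivity} since $t_1 = t_2$ is syntactic there, and the inductive step for $\sigma = \{x \mapsto s\} \cup \sigma'$ discharges $x = s$ using \Cref{lemma1} to rewrite $t_i[s/x]$ back to $t_i$, then appeals to the induction hypothesis on $t_1[s/x]$ and $t_2[s/x]$, which remain unified by the tail $\sigma'$.
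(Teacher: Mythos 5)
Your proof is correct, but it takes a genuinely different route from the paper's. The paper keeps the substitution on the term patterns only: it first uses \Cref{ecfequal} to turn the goal $\phi^\sigma \rightarrow t_1 = t_2$ into the equality $t_1 \land \phi^\sigma = t_2 \land \phi^\sigma$, rewrites each side with \Cref{lemma2} (instantiated at $\varphi = t_1$ and $\varphi = t_2$) to get $t_1\sigma \land \phi^\sigma = t_2\sigma \land \phi^\sigma$, and then applies \Cref{ecfequal} once more, closing with reflexivity since $t_1\sigma$ and $t_2\sigma$ are identical. You instead apply \Cref{lemma2} a single time to the whole equality pattern $t_1 = t_2$ and finish by pure propositional reasoning, which is arguably more economical: you avoid \Cref{ecfequal} entirely (and hence the side condition that $\phi^\sigma$ be a predicate pattern, as well as the rewriting under equality via congruence). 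The price is the bookkeeping fact you correctly flag, namely that $(t_1 = t_2)\sigma$ is syntactically $t_1\sigma = t_2\sigma$; this does hold here, since equality is a notation built from $\to$, $\bot$ and application of the constant $\defined{\_}$, with no binders involved, so substitution distributes through it without capture issues. Your fallback induction on $\sigma$ is also viable but unnecessary given that observation. Both arguments ultimately reduce to the same kernel: the syntactic identity $t_1\sigma \equiv t_2\sigma$ discharged by \ref{rule:reflexivity}, transported back through $\phi^\sigma$ by \Cref{lemma2}.
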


\begin{proof}
	First we take the $\phi^\sigma$ condition and add it to both sides of
	the equality using \Cref{ecfequal}, resulting in $t_1 \land
	\phi^\sigma = t_2 \land \phi^\sigma$.

	Next, we rewrite with \Cref{lemma2} on both sides of the equality, which
	gives us $t_1\sigma \land \phi^\sigma = t_2\sigma \land \phi^\sigma$.

	Now, we can remove the $\phi^\sigma$ from both sides by using
	\Cref{ecfequal} again, and the resulting $t_1\sigma = t_2\sigma$ is
	exactly the definition of $\sigma$ being a unifier of $t_1$ and $t_2$,
	which is in our hypothesis.
\end{proof}

This is the same as the above, but in the other direction.

\begin{theorem}[Soundness] \label{soundness}
	If a substitution $\sigma$ is the most general unifier of two patterns
	$t_1$ and $t_2$, then $\Gamma \vdash t_1 = t_2 \leftrightarrow
	\phi^\sigma$.
\end{theorem}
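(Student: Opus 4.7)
The plan is to prove the biconditional by splitting it into its two implications and discharging each with a result that has already been established earlier in the section. Recall that $\varphi_1 \leftrightarrow \varphi_2$ is defined as $(\varphi_1 \rightarrow \varphi_2) \land (\varphi_2 \rightarrow \varphi_1)$, so by the rule \ref{rule:splitand} it suffices to prove $\Gamma \vdash t_1 = t_2 \rightarrow \phi^\sigma$ and $\Gamma \vdash \phi^\sigma \rightarrow t_1 = t_2$ independently.

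For the forward direction, I would invoke \Cref{lemma4coro} directly: it states that whenever $\sigma$ is the most general unifier of $t_1$ and $t_2$, we have $\Gamma \vdash t_1 = t_2 \rightarrow \phi^\sigma$, which is precisely what is needed. This is the implication that relies on the unification algorithm actually terminating with $\sigma$ as its output, hence the machinery built up through \Cref{lemma3,lemma4} and \Cref{thm:convenient}.

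For the backward direction, I would apply \Cref{lemma5}. That lemma requires $\sigma$ to merely be a unifier of $t_1$ and $t_2$, and a most general unifier is in particular a unifier (this is immediate from \Cref{def:normalsubs} onward in the background on unification), so the hypothesis is satisfied and we obtain $\Gamma \vdash \phi^\sigma \rightarrow t_1 = t_2$.

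There is no real obstacle here, since the two halves have already been proven; the only concern is a bookkeeping one, namely presenting the argument cleanly inside the sequent calculus. Concretely, starting from the empty local context, I would apply \ref{rule:splitand} to reduce the goal to the two implications, then lift each of \Cref{lemma4coro} and \Cref{lemma5} into the calculus via \ref{rule:inherit}, obtaining the sequent form of each implication and finishing. The theorem is thus essentially the conjunction of the two soundness directions developed in the preceding lemmas.
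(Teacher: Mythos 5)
Your proposal is correct and matches the paper's own proof, which simply states that the theorem is the conjunction of Corollary~\ref{lemma4coro} (for the forward implication) and Lemma~\ref{lemma5} (for the backward implication, noting that a most general unifier is in particular a unifier). The extra bookkeeping you describe with \ref{rule:splitand} and \ref{rule:inherit} is a reasonable elaboration of the same argument.
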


\begin{proof}
	This is the conjunction of \Cref{lemma4coro} and \Cref{lemma5}.
\end{proof}

The soundness theorem derives a result that allows us to manipulate conjunction patterns so that the conjunction of two structural patterns can be turned into the conjunction of a single structural pattern and a single predicate pattern. This is of practical importance as it allows matching logic provers to extract predicates to be discharged by external solvers:

\begin{corollary} \label{extractpredicate}
	Let $\sigma$ be the most general unifier of two patterns $t_1$ and
	$t_2$. Then $\Gamma \vdash t_1 \land t_2 \leftrightarrow t_1 \land
	\phi^\sigma$ and $\Gamma \vdash t_1 \land t_2 \leftrightarrow t_2 \land
	\phi^\sigma$.
\end{corollary}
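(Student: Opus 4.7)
The plan is to chain together Lemma~\ref{prop3} (which introduces a self-equality alongside a conjunction of functional patterns) with the Soundness theorem (which exchanges the equality $t_1 = t_2$ with the unifier predicate $\phi^\sigma$), glued by the Congruence meta-theorem so that we can rewrite under $t_1 \land \square$ and $t_2 \land \square$.

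First I would establish the first equivalence $\Gamma \vdash t_1 \land t_2 \leftrightarrow t_1 \land \phi^\sigma$. Since $t_1$ and $t_2$ are term patterns (hence functional), Lemma~\ref{prop3} gives $\Gamma \vdash t_1 \land t_2 \leftrightarrow t_1 \land (t_1 = t_2)$. The Soundness theorem supplies $\Gamma \vdash (t_1 = t_2) \leftrightarrow \phi^\sigma$, and applying Congruence inside the pattern context $t_1 \land \square$ yields $\Gamma \vdash t_1 \land (t_1 = t_2) \leftrightarrow t_1 \land \phi^\sigma$. Chaining the two equivalences by transitivity closes this half of the corollary.

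For the second equivalence $\Gamma \vdash t_1 \land t_2 \leftrightarrow t_2 \land \phi^\sigma$, I would follow the same recipe but use the commuted form of Lemma~\ref{prop3} already noted in the paper, which gives $\Gamma \vdash t_1 \land t_2 \leftrightarrow t_2 \land (t_1 = t_2)$ directly. (Alternatively, one may first rewrite $t_1 \land t_2$ to $t_2 \land t_1$ via commutativity of conjunction and then apply Lemma~\ref{prop3}, exploiting the symmetry of equality to turn $t_2 = t_1$ into $t_1 = t_2$.) Then Congruence in the context $t_2 \land \square$ together with the Soundness theorem replaces $t_1 = t_2$ by $\phi^\sigma$, and transitivity finishes the proof.

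There is no real obstacle here; the argument is purely a matter of stitching together already-proven ingredients via Congruence. The only thing to keep an eye on is that Soundness is applied as an equivalence of the full patterns $t_1 = t_2$ and $\phi^\sigma$, so the rewriting step is a straightforward substitution under a pattern context rather than an appeal to \ref{rule:rewritelocal}, which would require an equality rather than an equivalence in the antecedent.
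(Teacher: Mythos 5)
Your proposal is correct and follows essentially the same route as the paper: apply \Cref{prop3} (and its commuted form) to introduce $t_1 = t_2$ alongside $t_1$ or $t_2$, then replace that equality by $\phi^\sigma$ using \Cref{soundness}. Your explicit appeal to Congruence merely makes precise the ``rewriting'' step that the paper leaves implicit, so there is no substantive difference.
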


\begin{proof}
	From \Cref{prop3} we know that $t_1 \land t_2 \leftrightarrow t_1
	\land t_1 = t_2$. Using the commutativity of $\land$ and symmetry of
	$=$, we can also get $t_1 \land t_2 \leftrightarrow t_2 \land t_1 =
	t_2$ from the same lemma. 

	We also know from \Cref{soundness} that $t_1 = t_2 \leftrightarrow
	\phi^\sigma$. Rewriting the previous two statements with this we get
	the two statements of this theorem.
\end{proof}

Finally, we demonstrate the algorithm and its soundness on an example.

\begin{example}
	We present the example shown in \Cref{polyex}, generate its unifier
	and prove the first statement of \Cref{extractpredicate} with its terms
	(the second may be proven similarly): There exists a $\sigma$ substitution such that $\Gamma \vdash
	\underbrace{f \, x \, (g \, 1) \, (g \, z)}_{t_1} \land \underbrace{f
	\, (g \, y) \, (g \, y) \, (g \, (g \, x))}_{t_2} \leftrightarrow
	\underbrace{f \, x \, (g \, 1) \, (g \, z)}_{t_1} \land \phi^\sigma$.

	We use existential quantification on $\sigma$ because we know that the terms are unifiable (therefore it does exist)
	and it can be automatically inferred as we run the algorithm.
\end{example}

\begin{proof}
	We begin with the $\rightarrow$ direction. First we use \Cref{prop3}
	to rewrite $t_1 \land t_2$ as $t_1 \land t_1 = t_2$.

	Then, we use \Cref{lemma4}, since we do not have a proof that
	$\sigma$ is the most general unifier (since we do not even know what
	$\sigma$ is yet), we cannot use \Cref{lemma4coro}. For this, we need
	to prove that there is some non-$\bot$ $P$ unification problem, such
	that $\UPone{t_1}{t_2} \Rightarrow^* P$. We will also not state what $P$
	is ahead of time, as that too can be inferred.

	This proof follows the steps outlined in \Cref{polyex}, but this
	time we need to use the applicative version of the
	\textbf{Decomposition} rule. We use the transitive constructor of
	$\Rightarrow^*$ with each step, and when we reach the solved form, end
	with the reflexive constructor. The steps are as follows:

	\begin{center}
	\begin{tabular}{\ruletablespec}
		\UPone{\highlight{f \, x \, (g \, 1) \, (g \, z)}}
			{\highlight{f \, (g \, y) \, (g \, y) \, (g \, (g \, x))}}
		 & \rulearrow{Decomposition} \\
		 \UPtwo{\highlight{f \, x \, (g \, 1)}}{\highlight{f \, (g \, y) \, (g \, y)}}{g \, z}{g \, (g \, x)}
		 & \rulearrow{Decomposition} \\
		 \UPthree{\highlight{f \, x}}{\highlight{f \, (g \, y)}}{g \, 1}{g \, y}{g \, z}{g \, (g \, x)}
		 & \rulearrow{Decomposition} \\
		 \UPfour{\highlight f}{\highlight f}{x}{g \, y}{g \, 1}{g \, y}{g \, z}{g \, (g \, x)}
		 & \rulearrow{Delete} \\
		 \UPthree{x}{g \, y}{\highlight{g \, 1}}{\highlight{g \, y}}{g \, z}{g \, (g \, x)}
		 & \rulearrow{Decomposition} \\
		 \UPfour{x}{g \, y}{\highlight g}{\highlight g}{1}{y}{g \, z}{g \, (g \, x)}
		 & \rulearrow{Delete} \\
		 \UPthree{x}{g \, y}{1}{y}{\highlight{g \, z}}{\highlight{g \, (g \, x)}}
		 & \rulearrow{Decomposition} \\
		 \UPfour{x}{g \, y}{1}{y}{\highlight g}{\highlight g}{z}{g \, x}
		 & \rulearrow{Delete} \\
		 \UPthree{x}{g \, y}{\highlight 1}{\highlight y}{z}{g \, x}
		 & \rulearrow{Orient} \\
		 \UPthree{x}{g \, y}{y}{1}{z}{g \, x}
		 & \rulephantom \\
	\end{tabular}
	\end{center}

	\vspace*{5mm}

	We could have used \textbf{Elimination} at the end, however, it will
	not help us in this proof, so we omitted those steps of the original
	example.
	Now that we have successfully inferred what $P$ is, we can easily prove
	that it is not $\bot$ and we have a hypothesis that says $t_1 = t_2
	\rightarrow x = g \, y \land y = 1 \land z = g \, x$. We can
	apply this to our hypothesis of $t_1 = t_2$.
	With this, we can actually already solve our goal. The $t_1$ part is
	trivial, and the $\phi^\sigma$ part is unknown, because $\sigma$ is
	uninstantiated, however, because it is a predicate, we know that it is a
	chain of conjunctions. If we solve this with the one
	from above, it will be inferred that $\sigma$ is indeed $\{(x \mapsto g \,
	y),\ (y \mapsto 1),\ (z \mapsto g \, x)\}$. 
	Now that we know this, the $\leftarrow$ direction can be solved. Here,
	the now known $\phi^\sigma$ gives us a series of equalities which we
	can use to repeatedly rewrite subterms in our goal of $t_1 \land t_2$
	and the $t_1$ hypothesis, until the $t_1$ and $t_2$ in them become the
	same pattern. From this the solution is trivial.

	We intuitively know that this rewriting is possible, as $\sigma$ is a
	unifier of $t_1$ and $t_2$, therefore $t_1\sigma = t_2\sigma$, and
	although we did not formally prove this, it is easy to manually check
	in this example. In this manual proof it is also not necessary to do
	the substitution fully (i.e. to calculate the actual $t_1\sigma$ and
	$t_2\sigma$), it is enough to go until the two terms match.
\end{proof}

\subsection{Mechanization}

As part of our work, we mechanized all the above presented results in the Coq proof assistant, based on an existing formalisation~\cite{bereczky2022mechanizing} of applicative matching logic. The formalisation uses a locally-nameless variable representation~\cite{locallynameless} and deep embedding, which makes the mathematical proofs and the formal proofs somewhat diverging here and there due to the additional technical complexity in the implementation stemming from  well-formedness constraints. However, the formalisation provides an implementation of the sequent calculus in terms of an embedded proof mode~\cite{proofmodepaper}, so our proof descriptions resemble the actual formal proof scripts~\cite{unifpr}.

\paragraph{Instances of the abstract unification problem.}

The abstract unification problem was implemented as a type class, and an instance of this class has been created for sets containing pairs of well-formed patterns, wrapped in an \texttt{option}. The optional type allows us to use the \texttt{None} value to represent failed unification problems. We are using stdpp's set classes~\cite{stdppbase}, therefore this instance is still generic in the exact implementation of the set, as long as it has all required methods and properties. In the future, we plan to instantiate this class for lists and other containers and investigate their non-functional properties.

\section{Conclusion}
\label{sec:conclusion}

This paper presented a generalisation of the well-known unification problem of symbolic expressions. We defined abstract unification problems for term patterns in the applicative variant of matching logic, and following in the footsteps of the related work we defined a rule-based algorithm for solving abstract unification problems. We instantiated abstract problems to sets and demonstrated their behaviour via examples. Last but not least, we proved the soundness of the unification algorithm, syntactically, using a sequent calculus for matching logic. The entire development is supported by a mechanisation of the formal theory, implemented in the Coq proof assistant.

\subsection*{Acknowledgements} We warmly thank Runtime Verification Inc. for their generous funding support.

\bibliographystyle{eptcs}
\bibliography{biblio}
\end{document}